\algnewcommand{\algorithmicand}{\textbf{AND }}
\algnewcommand{\algorithmicor}{\textbf{OR }}
\algnewcommand{\algorithmicxor}{\textbf{XOR }}
\algnewcommand{\algorithmicnot}{\textbf{NOT }}
\algnewcommand{\OR}{\algorithmicor}
\algnewcommand{\AND}{\algorithmicand}
\algnewcommand{\XOR}{\algorithmicxor}
\algnewcommand{\NOT}{\algorithmicnot}
\algnewcommand{\var}{\texttt}
\algnewcommand{\algorithmicbreak}{\textbf{break}}
\algnewcommand{\Break}{\algorithmicbreak}
\DeclarePairedDelimiter\abs{\lvert}{\rvert}%
\DeclarePairedDelimiter\norm{\lVert}{\rVert}%
\let\oldabs\abs
\def\abs{\@ifstar{\oldabs}{\oldabs*}}
\let\oldnorm\norm
\def\norm{\@ifstar{\oldnorm}{\oldnorm*}}
\theoremstyle{plain}
\newtheorem{theorem}{Theorem}[section]
\newtheorem{lemma}[theorem]{Lemma}
\newtheorem{cor}[theorem]{Corollary}
\theoremstyle{definition}
\newtheorem{definition}[theorem]{Definition}
\renewcommand{\epsilon}{\varepsilon}
\newcommand{\eps}{\varepsilon}
\renewcommand{\tilde}{\widetilde}
\renewcommand{\hat}{\widehat}
\newcommand{\ed}{\mathsf{ed}}
\newcommand{\size}{\mathsf{size}}
\newcommand{\parent}{\mathsf{par}}
\newcommand{\child}{\mathsf{child}}
\newcommand{\similarity}{\mathsf{sim}}
\newcommand{\sub}{\mathsf{sub}}
\newcommand{\childcount}{\mathsf{d}}
\newcommand{\rangemax}{\mathsf{rangemax}}
\newcommand{\Left}{\mathsf{l}}
\newcommand{\Right}{\mathsf{r}}
\newcommand{\maxrow}{\mathsf{maxrow}}
\newcommand{\mincol}{\mathsf{mincol}}
\newcommand{\Root}{\mathsf{root}}
\newcommand{\vroot}{\mathsf{vroot}}
\newcommand{\MUL}{\mathsf{MUL}}
\tikzset{
triangle/.style={
  draw,solid,isosceles triangle,shape border rotate=90},
}
\title{Breaking the Cubic Barrier for (Unweighted) Tree Edit Distance}
\author{
  Xiao Mao \\
  Massachusetts Institute of Technology \\
  \texttt{xiao\_mao@mit.edu} \\
}
\date{}
\begin{document}

\maketitle

\begin{abstract}
The (unweighted) \emph{tree edit distance} problem for $n$ node trees asks to compute a measure of dissimilarity between two rooted trees with node labels. The current best algorithm from more than a decade ago runs in $O(n ^ 3)$ time [Demaine, Mozes, Rossman, and Weimann, ICALP 2007]. The same paper also showed that $O(n ^ 3)$ is the best possible running time for any algorithm using the so-called \emph{decomposition strategy}, which underlies almost all the known algorithms for this problem. These algorithms would also work for the \emph{weighted} tree edit distance problem, which cannot be solved in truly sub-cubic time under the APSP conjecture [Bringmann, Gawrychowski, Mozes, and Weimann, SODA 2018].

In this paper, we break the cubic barrier by showing an $O(n ^ {2.9546})$ time algorithm for the \emph{unweighted} tree edit distance problem.

We consider an equivalent maximization problem and use a dynamic programming scheme involving matrices with many special properties. By using a decomposition scheme as well as several combinatorial techniques, we reduce tree edit distance to the max-plus product of bounded-difference matrices, which can be solved in truly sub-cubic time [Bringmann, Grandoni, Saha, and Vassilevska Williams, FOCS 2016].
\end{abstract}

\section{Introduction}
\label{sec:intro}

\subsection{Overview}

One of the most fundamental problems in computer science is the \emph{(string) edit distance} problem, studied since the 1960's. Defined as the minimum number of deletions, insertions or substitutions needed to change one string into another, it is a natural way to measure the dissimilarity between data that can be represented as strings. As a measure for linearly ordered data, edit distance is not as useful when the data is hierarchically organized. For data that is in the form of ordered trees, \emph{tree edit distance} serves as a natural generalization of edit distance. First introduced by Selkow in 1977 \cite{selkow77}, it has found applications in a variety of areas such as computational biology \cite{gusfield_1997,10.1093/bioinformatics/6.4.309,HochsmannTGK03,waterman1995introduction}, structured data analysis \cite{KochBG03,Chawathe99,FerraginaLMM09}, image processing \cite{BellandoK99,KleinTSK00,KleinSK01,SebastianKK04}, and compiler optimization \cite{demaine2007}. One of the most notable applications is the analysis of RNA molecules whose secondary structures are typically represented as rooted trees \cite{gusfield_1997,doi:10.1137/0213024}.

For two \emph{rooted ordered} trees whose nodes are labeled with symbols, their tree edit distance is the minimum number of node deletions, insertions, and relabelings needed to change one tree into the other. When a node is deleted, its children become children of its parent. The resulting trees must be structurally identical, where the order of siblings matters, and symbols on corresponding nodes must also match.

\paragraph{Previous results.} Prior to our work, there has been a long line of research producing efficient algorithms for tree edit distance, as shown in Table \ref{table:ted}. In 1979, Tai \cite{Tai79} gave the first algorithm that computes this metric for two trees of size $n$ in $O(n ^ 6)$ time. In 1989, the running time was improved to $O(n^4)$ by Zhang and Shasha \cite{ZhangS89}, notably by using a dynamic programming approach. Their approach served as the basis for later algorithms. In 1989, Klein \cite{Klein98} obtained an $O(n^3 \log n)$ time algorithm by adapting a better strategy in deciding the direction of transitions in Zhang and Shasha's dynamic programming scheme and analyzing the running time using heavy-light decomposition. Finally, Demaine, Mozes, Rossman, and Weimann \cite{demaine2007} improved the running time to $O(n^3)$ by further optimizing this dynamic programming scheme and showed that their running time is the theoretical lower bound among a certain class of dynamic programming algorithms termed \emph{decomposition strategy algorithms} by Dulucq and Touzet \cite{DulucqT03,DulucqT05}. 
\begin{table}[h]
\begin{center}
\begin{tabular}{ |c|c|c| }
 \hline
 Runtime & Authors & Year\\
 \hline
 $O(n ^ 6)$   & Tai \cite{Tai79}   & 1979 \\
 $O(n ^ 4)$   & Shasha and Zhang \cite{ZhangS89} & 1989 \\
 $O(n ^ 3 \log n)$   & Klein \cite{Klein98} & 1998 \\
 $O(n ^ 3)$   & Demaine, Mozes, Rossman, and Weimann \cite{demaine2007} & 2007 \\
 $O(n ^ {2.9546})$ & \textbf{This paper} & \\
 \hline
\end{tabular}
\end{center}
\caption{Tree edit distance algorithms} \label{table:ted}
\end{table}

Another algorithm that is not so frequently mentioned in tree edit distance literature is the algorithm by Chen in 2001 \cite{chen01}, which is efficient when the number of leaves in one of the trees is small. Despite not being as well-known, Chen's algorithm is similar to our starting algorithm and will be discussed in detail in Section \ref{sec:review}.

All these previous algorithms also compute tree edit distance in a \emph{weighted} setting where the costs of deletions, insertions, or relabeling are not necessarily $1$, but functions of the symbols on the nodes involved. For the weighted case, Bringmann, Gawrychowski, Mozes, and Weimann \cite{apsphard2020} showed that a truly sub-cubic\footnote{By truly sub-cubic we mean $O(n ^ {3 - \eps})$ for some constant $\eps > 0$.} time algorithm would imply a truly sub-cubic time algorithm for the All-Pairs Shortest Paths (APSP) problem (assuming alphabet of size $\Theta(n)$), as well as an $O(n ^ {k(1 - \eps)})$ time algorithm for the Max-weight $k$-clique problem (assuming a sufficiently large constant-size alphabet). It is conjectured that neither of these algorithms exists \cite{vvwsurvey}, and therefore a truly sub-cubic algorithm for the weighted case is unlikely.

It is interesting to point out that the landscape for the \emph{string} edit distance problem now drastically differs from that of the tree edit distance problem: there is a quadratic-time fine-grained lower bound for the string edit distance problem based on the Strong Exponential Time Hypothesis (SETH) that also holds for the unweighted case (i.e. unit-cost operations)  \cite{BackursI18,AbboudBW15}. Sub-polynomial improvement has already been found for the unweighted case and the current best running time is $O(n ^ 2/\log ^ 2 n)$ for a finite alphabet \cite{MasekP80} and $O(n ^ 2 (\log \log n) ^ 2 / \log ^ 2 n)$ for an arbitrary alphabet \cite{tcs/BilleF08}. 

\subsection{Main result}

Our main result is the following:
\begin{theorem} \label{theorem:main}
    There is an $O(nm ^ {1.9546})$ time randomized algorithm and an $O(nm ^ {1.9639})$ time deterministic algorithm that computes the (unweighted) tree edit distance between two trees of sizes $n$ and $m$. 
\end{theorem}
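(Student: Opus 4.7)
The plan is to first recast tree edit distance as its equivalent maximization counterpart (tree similarity, which rewards each matched pair of nodes), since the resulting DP values are monotone and Lipschitz in the sizes of the subforests involved, and it is precisely this Lipschitz behavior that will eventually feed into a bounded-difference matrix product. Starting from Chen's formulation (to be reviewed in Section~\ref{sec:review}) and the decomposition-strategy framework of Demaine, Mozes, Rossman, and Weimann, I would view the DP as repeatedly evaluating similarity between pairs of subforests and organize the recursion along a heavy-path decomposition (in the style of Klein) of one of the two trees.

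Along each heavy path $P$ in the larger tree, I would collect the DP states whose ``active'' subforest slides along $P$ into a single matrix $A$, and stage the contributions coming from light off-path subtrees into a matrix $B$. The central structural step is to prove that the work on $P$ can be expressed as a single $(\max,+)$-product $A \otimes B$ (possibly after some combinatorial preprocessing), so that the global running time becomes a sum of such matrix products, one per heavy path. Standard heavy-path accounting charges each node to $O(\log n)$ paths, so the total dimensions summed over all products will be roughly $nm$ up to polylogarithmic factors.

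The crucial invariant I would establish is that both $A$ and $B$ are \emph{bounded-difference} matrices: consecutive entries differ by $O(1)$, because enlarging a subforest by one node can change its optimal similarity with any fixed subforest by at most one. Under this condition, I can invoke the Bringmann, Grandoni, Saha, and Vassilevska Williams truly sub-cubic $(\max,+)$ product for bounded-difference matrices; plugging in their randomized exponent gives the $O(nm^{1.9546})$ bound, and using the deterministic variant yields $O(nm^{1.9639})$. A parameter balancing between the two tree sizes, combined with the $O(\log n)$ charging from heavy-path decomposition, gives the final asymptotics claimed in the theorem.

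The main obstacle I expect is actually achieving the clean ``matrix product'' form. The natural DP interleaves many choices — deleting a root on either side, matching two roots, recursing into leftmost or rightmost subtrees — and it is not a priori obvious that these can be regrouped into $(\max,+)$ products of Lipschitz matrices without destroying the bounded-difference property at the seams between a heavy path and its light subtrees. Carefully designing this regrouping, verifying that the entries of $A$ and $B$ remain $O(1)$-Lipschitz after all combine operations, and ensuring that only a near-linear (times polylog) total amount of ``matrix-product work'' is invoked will constitute the combinatorial heart of the argument hinted at in the abstract.
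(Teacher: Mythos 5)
Your proposal shares the paper's first move (passing to similarity so that the DP is monotone and $2$-Lipschitz, and noting that concatenating subforests becomes a $(\max,+)$-product of similarity matrices), but after that it diverges in a way that leaves the central difficulty unresolved. The paper's algorithm does \emph{not} use heavy-path decomposition; it uses a size-$\Delta$ block decomposition of $T_1$ into $O(n/\Delta)$ transitions, and the reason is exactly the obstacle your plan runs into: the similarity matrix $S(\sub(u))$ is a $(2m+1)\times(2m+1)$ object carrying $\Theta(m\cdot\min(m,|\sub(u)|))$ bits of information, so materializing $S(\sub(u))$ for all $u\in T_1$ already costs $\Theta(nm^2)$ time and space. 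Heavy-path charging reduces the number of \emph{recursive calls}, but it does not reduce the number of distinct similarity matrices you would need, and the paper explicitly argues that any scheme computing a similarity matrix per node of $T_1$ cannot be sub-cubic. The $\Delta$-decomposition is what lets the paper compute only $O(n/\Delta)$ such matrices and bridge the gaps with cheaper ``type~II'' transitions.

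The second gap is that invoking the Bringmann--Grandoni--Saha--Vassilevska~Williams bounded-difference $(\max,+)$ product as a black box, which is all your plan does, is not enough even after the decomposition: with $m=n$ one has $(n/\Delta)\cdot m^{2.8244} > n^3$. The paper needs the extra fact that $S(F)$ is not merely bounded-difference but $2\min(|F|,m)$-bounded-upper-triangular, and it designs a dedicated recursive multiplication routine (Theorem~\ref{theorem:mul}, Algorithm~\ref{algorithm:mul}) that splits the matrix, handles the dense lower-triangular part with a cheap combinatorial $O(m^2 n)$ routine, and only calls the bounded-difference algorithm on $\Delta$-sized blocks; this is what produces the $\tilde O(m^{0.9038}n^2)$ bound that makes the type~I transitions affordable. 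Your proposal has no analogue of this routine, nor of the entire type~II machinery (restricted similarity matrices, the bottom/middle/top decomposition along the path from $\vroot(F')$ to $\vroot(F)$, and the link/cut-tree speed-up), which is where most of the paper's combinatorial work lives. Finally, a small point: enlarging a subforest by one node can change the similarity by up to $2$, not $1$ (each matched pair contributes up to $2$), which is why the paper works with $2$-bounded-difference matrices; and the stated exponents $1.9546$ / $1.9639$ come from the explicit balance $\Delta\approx m^{0.4773}$ between the $|T_1||T_2|^2/\Delta^{\Theta(1)}$ cost of the matrix products and the $|T_1||T_2|\Delta^2$ cost of the combinatorial transitions, a calculation your outline does not carry out.
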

When $m = O(n)$, this implies an $O(n ^ {2.9546})$ time randomized and an $O(n ^ {2.9639})$ time deterministic algorithm that solve the tree edit distance problem, which are the first ever known truly sub-cubic algorithms for this problem.

The two exponents in our results are from applications of the work by Bringmann, Grandoni, Saha, and Vassilevska Williams \cite{Bringmann16}, which shows that the max-plus product of two \emph{bounded-difference} $n \times n$ matrices (i.e. bounded difference between adjacent entries, see Definition \ref{def:bd}) can be computed in $O(n ^ {2.8244})$ randomized and $O(n ^ {2.8603})$ deterministic time. 

%As will be discussed in Section \ref{sec:speedup}, the exponents in Theorem \ref{theorem:main} could potentially be improved. Firstly, we apply the algorithm in \cite{Bringmann16} on \emph{rectangular} matrices as a black box, a more thorough adaptation could potentially lead to a slightly better running time. Secondly the bounded-difference property is not the only property that the matrices involved in our algorithm have, and there could be more efficient algorithms that work for those matrices.

%As will be discueed in Section \ref{sec:future}, there can be potential future applications of the ideas in our algorithm to some other settings of tree edit distance. The most likely future result is in the bounded case. Some potential application of our ideas to the case where the distance is upper bounded by $k$ could give us an $\tilde O(nk ^ {2 - \eps})$ algorithm for some $\eps > 0$.

Although we focus our discussion on the unweighted setting, our algorithm does not actually require the weights of the operations to be exactly $1$. It can be easily verified that our algorithm would also work for weighted tree edit distance where weights are integers bounded by $W$ with a running time polynomially dependent on $W$ and with the same exponents on $n$.

\subsection{Related research}
The tree edit distance problem in the bounded distance setting and approximation setting has received a lot of attention recently. New work in these areas is all under the unweighted setting.

In the bounded tree edit distance problem, we assume that the distance is upper bounded by a given parameter $k$. The motivation is that in many practical scenarios, the tree edit distance tends to be small. One of the previous results for the bounded case was due to Touzet \cite{Touzet05,Touzet07}, who gave an algorithm that runs in $O(nk ^ 3)$ time. Very recently, Akmal and Jin \cite{akmal2021faster} gave an algorithm that runs in $O(nk ^ 2\log n)$ time, which is faster when $k = \omega(\log n)$. For the \emph{string} edit distance problem, when the distance is at most $k$, the current best algorithm with running time $\tilde O(n+k^2)$\footnote{By $\tilde O$ we hide $O(n ^ {o(1)})$ factors such as $\log {n}$.} was due to Myers \cite{Meyers86}, Landau and Vishkin \cite{LandauV88}, who used suffix tree to improve upon an earlier $O(nk)$ time algorithm by Ukkonen \cite{Ukkonen85}.

For the approximation problem, the most notable recent result was the $(1 + \eps)$-approximation algorithm by Boroujeni, Ghodsi, Hajiaghayi, and Seddighin \cite{apxfocs2019}, which runs in $\tilde O(\eps^{-3} n^2)$ time and can be improved to $\tilde O(\eps^{-3} nk)$ if the distance is upper bounded by $k$. Prior to their research, literature on approximation of tree edit distance was scarce, but approximation of \emph{string} edit distance had found much popularity \cite{AndoniO12,AndoniKO10,BoroujeniEGHS18,ChakrabortyDGKS18,BrakensiekR20,KouckyS20}, which ultimately led to a constant-factor approximation algorithm running in near-linear time \cite{AndoniN20}. 

There are other variants of the tree edit distance problem. Some examples are those defined on unrooted or unordered trees, or parameterized by the depth or the number of leaves. For a thorough review, one can refer to the comprehensive survey by Bille \cite{Bille05survey}.

\subsection{Technique overview} \label{sec:approach}

\paragraph{An equivalent maximization problem and why it matters.} The most fundamental idea in our approach is to consider the ``inverted'' maximization problem. Note that the tree edit distance is bounded by the sum of the sizes of the two input trees. We define the non-negative \emph{similarity} to be the the difference between the sum of the sizes of the two trees and the tree edit distance between them: $\similarity(T_1, T_2) = \abs{T_1} + \abs{T_2} - \ed(T_1, T_2)$. We now show the motivation behind this idea by considering the \emph{string} edit distance and similarity. Similarity between strings is analogous to Longest Common Subsequence (LCS), but two matching positions contribute $2$ to the answer if the characters are the same, and $1$ to the answer if the characters differ. 

For two strings $s_1$ and $s_2$, let $\abs{s}$ be the length of the string $s$. Let $s_2[l, r) (1 \le l \le r \le \abs{s_2} + 1)$ be the substring of $s_2$ from the $l$-th character to the $(r - 1)$-th character. We consider two $(\abs{s_2} + 1) \times (\abs{s_2} + 1)$ matrices: the \emph{edit distance matrix} $A(s_1) = a_{ij}(s_1)$, and the \emph{similarity matrix} $B(s_1) = b_{ij}(s_1)$, where
\begin{align*}
    a_{ij}(s_1) = 
    \begin{cases}
        \ed(s_1, s_2[i, j))         & \text{if } i \le j \\
        \infty                    & \text{if } i > j 
    \end{cases}, 
    \qquad
    b_{ij}(s_1) = 
    \begin{cases}
        \similarity(s_1, s_2[i, j))             & \text{if } i \le j \\
        -\infty                               & \text{if } i > j 
    \end{cases}.
\end{align*}
The reason why we consider these matrices is because it is easy to see that for another string $s_3$, let $s_1 + s_3$ be the concatenation of $s_1$ and $s_3$ and we have
\begin{align*}
    a_{ij}(s_1 + s_3) &= \min_{k}{\{a_{ik}(s_1) + a_{kj}(s_3)\}}, \\ 
    b_{ij}(s_1 + s_3) &= \max_{k}{\{b_{ik}(s_1) + b_{kj}(s_3)\}}.
\end{align*}
These transitions take the form of min/max-plus products. Although it is conjectured that in general these products cannot be computed in truly sub-cubic time, more efficient algorithms are possible for matrices with certain special properties. Thus this leads us into a hopeful direction as long as the matrices involved have useful special properties. Consider the following example:
\begin{align*}
    s_1 = \textrm{``abac''}, 
    \qquad
    s_2 = \textrm{``acdca''}.
\end{align*}
We have
\begin{align*}
    A(s_1) = 
        \begin{pmatrix}
            4 & 3 & 2 & 3 & 2 & 3 \\
              & 4 & 3 & 4 & 3 & 4 \\
              & &   4 & 4 & 3 & 3 \\
              &&&       4 & 3 & 3 \\
              &\infty& &  & 4 & 3 \\
              & & &       &   & 4
        \end{pmatrix},
        \qquad
    B(s_1) = 
        \begin{pmatrix}
            0 & 2 & 4 & 4 & 6 & 6 \\
              & 0 & 2 & 2 & 4 & 4 \\
              & &   0 & 1 & 3 & 4 \\
             &&&        0 & 2 & 3 \\
              &-\infty && & 0 & 2 \\
              & & &       &   & 0
        \end{pmatrix}.
\end{align*}
We can see that the similarity matrix $B(s_1)$ has the following properties:
\begin{itemize}
    \item Each row of $B(s_1)$ is monotonically non-decreasing from left to right.
    \item Each column of $B(s_1)$ is monotonically non-increasing from top to bottom.
\end{itemize}
Those properties do not apply to the edit distance matrix $A(s_1)$, which makes it much less convenient to deal with in its plain form. Moreover, for any two strings $s_x$ and $s_y$, from the analogy of similarity to LCS, one can see that similarity matrices have the following property:
\begin{itemize}
    \item The non-$(-\infty)$ entries in the similarity matrix are bounded by $2\min(\abs{s_x}, \abs{s_y})$.
\end{itemize}
The non-$\infty$ entries of the edit distance matrix, however, are only bounded by $\max(\abs{s_x}, \abs{s_y})$, which is greater than $2\min(\abs{s_x}, \abs{s_y})$ when one of the strings is significantly shorter.

These properties carry over to the tree edit distance problem. We extend these matrices to the problem of computing the edit distance between two trees $T_1$ and $T_2$. Let $m = \abs{T_2}$. The entries of a similarity matrix for a forest $F$ are the similarities between $F$ and \emph{subforests} of $T_2$, which are defined using a form of depth-first traversal sequence on $T_2$. These similarity matrices are $(2m + 1) \times (2m + 1)$ matrices, and they still inherit the monotone and bounded properties of similarity matrices on strings, and we shall see in Section \ref{sec:review} that the transitions are in the form of max-plus product even on trees.

We utilize the monotone and bounded properties of the matrices and show that the product of the similarity matrices for two forests $F_1$ and $F_2$ can be computed in $\tilde O(\abs{F_1}\abs{F_2}m)$ time using a simple combinatorial algorithm and some data structure to store and modify the matrices implicitly. Based on a dynamic programming scheme similar to the one in Chen's algorithm \cite{chen01}, we develop a cubic time algorithm which computes the similarity matrix associated with the subtree of every node of the first tree.

Since our cubic algorithm involves max-plus products of similarity matrices, we actually utilize a more generalized type of transitions unseen in the line of algorithms originating from Zhang and Shasha's algorithm, but used to an extent in Chen's algorithm. This is an important reason why we are eventually able to reach a truly sub-cubic running time.

To speed-up our algorithm, a more efficient algorithm for the max-plus product is required. One might think that the similarity matrices involved would admit the \textit{Anti-Monge property} since these matrices do admit this property when defined on strings \cite{doi:10.1137/0219066, doi:10.1137/S0097539795288489}. The famous SMAWK algorithm can compute the max-plus product between two $n \times n$ Anti-Monge matrices in $O(n ^ 2)$ time \cite{10.1145/10515.10546}. However, we will show in Appendix \ref{appendix:notmonge} that the Anti-Monge property does not hold for similarity matrices defined on trees. Our algorithm exploits a different property.

\paragraph{Reduction to bounded-difference matrix multiplications using a decomposition scheme.} We notice that the similarity matrix $B$ in the previous string example has the following property:
\begin{itemize}
    \item Adjacent non-$(-\infty)$ values in $B$ differ by at most 2.
\end{itemize}
Lemma \ref{lemma:bd} shows that this is true for similarity matrices in general even on trees. Thus it is promising to optimize the cubic time of the algorithm to truly sub-cubic by adopting the truly sub-cubic time algorithm of max-plus product between bounded-difference matrices in \cite{Bringmann16}. However, this cannot be done in a straightforward way. As we shall see, it seems impossible to break the cubic barrier if we compute the similarity matrix associated with the subtree of \emph{every} node of the first tree, as is the case in the cubic algorithm. In our main algorithm, we decompose the transitions into blocks to skip some of these nodes. This \emph{decomposition scheme} is the heart of our algorithm.

Let the size of the first tree be $n$. We set a block size $\Delta = n ^ d$ for some $d$ slightly smaller than $0.5$ and we decompose the problem into $O(n / \Delta)$ transitions that either involve concatenating two subforests (type I), which is equivalent to taking the max-plus product of their similarity matrices, or are from a subforest to another subforest that contains it with only $O(\Delta)$ more nodes (type II). 

For the type I transitions, we cannot simply call the algorithm in \cite{Bringmann16} as a sub-routine since when $m = n$, $(n / \Delta) m ^ {2.8244} > n ^ 3$. Recall that multiplying the similarity matrix for two forests $F_1$ and $F_2$ can be computed in $\tilde O(\abs{F_1}\abs{F_2}m)$ time. It turns out that if this can be improved to $O(\abs{F_1} ^ {1 - \eps}m ^ 2)$ time for some $\eps > 0$, the total running time for all these transitions becomes truly sub-cubic. To do this, we first design an efficient sub-routine for max-plus products between matrices where no $-\infty$ below the main diagonal is present. By combining this sub-routine with the algorithm in \cite{Bringmann16}, we develop a recursive algorithm that achieves the desired time bound for similarity matrices.

Many of our core ideas are involved in the algorithm for the type II transitions. The main part of the algorithm is a three-part combinatorial method combining many techniques and can be sped up using the monotone and bounded properties of the matrices to $\tilde O(n\Delta ^ 4)$, which would already imply a truly sub-cubic total running time for our entire algorithm. We are able to further optimize the running time to $\tilde O(n\Delta ^ 3)$ using some data structure for path modifications on trees such as a \emph{link/cut tree} \cite{SLEATOR1983362}. The non-combinatorial part involves max-plus product of similarity matrices, which can be sped up using the algorithm for the type I transitions so that its contribution to the total running time is also truly sub-cubic.

\subsection{Organization} \label{sec:organization}
In Section \ref{sec:prelim}, we introduce the notation used throughout the rest of the paper and formally define tree edit distance. We also formally introduce the results on max-plus products of bounded-difference matrices in \cite{Bringmann16} as well as a computation model for implicitly storing and modifying the row-monotone and column-monotone matrices. In Section \ref{sec:review}, we review the line of classic algorithms originating from Zhang and Shasha's dynamic programming scheme and show why they could not be improved to truly sub-cubic running time. We then review Chen's algorithm and relate it to our algorithm. In Section \ref{sec:algorithm} we introduce our algorithm in detail. Finally in Section \ref{sec:futurework}, we first discuss how our algorithm can potentially be sped up, and then discuss what future work can be done in the area of tree edit distance.

\section{Preliminaries}
\label{sec:prelim}

\subsection{Tree edit distance related definitions}
The tree edit distance problem involves ordered trees. For an ordered tree $T$, each node of $T$ is labeled with a symbol from some given alphabet $\Sigma$. In this paper, we consider the size of the alphabet to be $O(n)$. For a node $u \in T$, let $\parent(u)$ denote the parent node of $u$. Let $\Root(T)$ be the root of the tree $T$.

We treat a forest $F$ as ordered as well, meaning that the order between the trees in the forests is important. Under this setting, we can treat a forest $F$ as a tree with a \emph{virtual root}, and let $\parent(v)$ be the virtual root if $v$ is the root of a tree in the forest. Let $\sub(u)$ be the subtree of $u$. We let $L_F$ denote the leftmost tree in $F$ and $R_F$ denote the rightmost tree in $F$. For a tree $T \in F$, let $F - T$ be the forest we get by removing the tree $T$ from $F$ while keeping the order of the remaining trees. Let $\abs{F}$ be the number of nodes in $F$. For a sequence of forests $F_1, F_2, \cdots, F_k$, let $F_1 + F_2 + \cdots + F_k$ be the concatenation of these forests from left to right. For two forests $F_1, F_2$ of $F$, we say $F_1 \subset F_2$ if all nodes in $F_1$ are in $F_2$. When $F_1 \subset F_2$, we use $F_2 \backslash F_1$ to denote the set of nodes in $F_2$ that are not in $F_1$. An empty forest is denoted by $\emptyset$.

For a node $u$ in a forest $F$, let $\childcount(u)$ be the number of children of $u$. For $1 \le k \le \childcount(u)$, let $\child(u, k)$ be the $k$-th child of $u$ from left to right. Let $\sub(u, x) = \sub(\child(u, x))$ and $\sub(u, [x, y]) = \sub(u, x) + \sub(u, x + 1) + \cdots + \sub(u, y)$. Specifically, for the virtual root $r$ of the forest $F$, let $\childcount(r)$ be the number of trees in $F$ and $\child(r, k)$ be the root of the $k$-th tree from left to right, and define $\sub(r, x)$ and $\sub(r, [x, y])$ similarly.

The \emph{node removal operation} removes a node $v$ from the forest $F$, and let the children of $v$ become children of $\parent(v)$, with the same ordering. The result of the removal is denoted by $F - v$.

\begin{definition}[(Unweighted) Tree Edit Distance]
\label{def:unrooted-ted}
For two forests $F_1$ and $F_2$, we consider the following two types of operations:
\begin{itemize}
    \item Relabeling: changing the label of a node to another symbol in $\Sigma$.
    \item Deletion: using the node removal operation to remove a node.
\end{itemize}

The \textit{tree edit distance} between $F_1$ and $F_2$, denoted by $\ed(F_1,F_2)$, is the minimum number of operations we can perform on $F_1$ and $F_2$ so that they become identical forests.
\end{definition}
Some literature distinguishes between \emph{forest edit distance} and tree edit distance, but for simplicity we will not make this distinction.

Figure \ref{fig:ed} gives an example of tree edit distance between two trees $T_1$ and $T_2$.
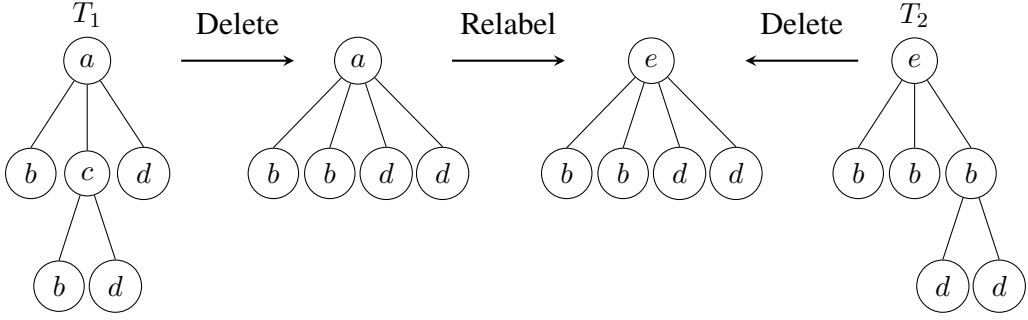
\begin{figure}[H]
    \centering 
    \begin{tikzpicture}[
scale=0.5,
every node/.style={solid},
edge from parent/.style={draw,solid}, 
level/.style={sibling distance=15mm},
level distance=30mm
]
\node [circle,draw,label=above:{$T_1$}] at (0, 0) {$a$}
    [child anchor=north]
    child {node [circle,draw] {$b$} edge from parent[solid]
    }
    child {node [circle,draw] {$c$} edge from parent[solid]
        child {node [circle,draw] {$b$} edge from parent[solid]
        }
        child {node [circle,draw] {$d$} edge from parent[solid]
        };
    }
    child {node [circle,draw] {$d$} edge from parent[solid]
    };
    
\node [circle,draw] at (7.2, 0) {$a$}
    [child anchor=north]
    child {node [circle,draw] {$b$} edge from parent[solid]
    }
    child {node [circle,draw] {$b$} edge from parent[solid]
    }
    child {node [circle,draw] {$d$} edge from parent[solid]
    }
    child {node [circle,draw] {$d$} edge from parent[solid]
    };
    
\node [circle,draw] at (15, 0) {$e$}
    [child anchor=north]
    child {node [circle,draw] {$b$} edge from parent[solid]
    }
    child {node [circle,draw] {$b$} edge from parent[solid]
    }
    child {node [circle,draw] {$d$} edge from parent[solid]
    }
    child {node [circle,draw] {$d$} edge from parent[solid]
    };
    
\node [circle,draw,label=above:{$T_2$}] at (22, 0) {$e$}
    [child anchor=north]
    child {node [circle,draw] {$b$} edge from parent[solid]
    }
    child {node [circle,draw] {$b$} edge from parent[solid]
    }
    child {node [circle,draw] {$b$} edge from parent[solid]
        child {node [circle,draw] {$d$} edge from parent[solid]
        }
        child {node [circle,draw] {$d$} edge from parent[solid]
        };
    };
    \draw [thick, -stealth] (2.5,0) -- (5.5,0);
    \node at (4, 1) {\large Delete};
    \draw [thick, -stealth] (9.7,0) -- (12.7,0);
    \node at (11.2, 1) {\large Relabel};
    \draw [thick, stealth-] (17.5,0) -- (20.5,0);
    \node at (19, 1) {\large Delete};
\end{tikzpicture}
\caption{An optimal series of operations to make $T_1$ and $T_2$ identical is shown, and $\ed(T_1, T_2) = 3$.} \label{fig:ed}
\end{figure}  

We now consider an equivalent maximization problem defined on similarity, which uniquely determines the edit distance:
\begin{definition}[Similarity]
    The \emph{similarity} between two forests $F_1$ and $F_2$ is defined as $\similarity(F_1, F_2) = \abs{F_1} + \abs{F_2} - \ed(F_1, F_2)$.
\end{definition}
Since it is obvious that $\ed(F_1, F_2) \le \abs{F_1} + \abs{F_2}$, similarity is always non-negative.

\begin{definition}[Bi-order traversal sequence]
Consider the depth-first traversal of a forest $F$ starting from the virtual root, with subtrees recursively traversed from left to right. From that we can generate an \emph{bi-order traversal sequence} of length $2\abs{F}$, where each node appears twice, in the following way:
\begin{itemize}
    \item Start from the empty sequence.
    \item Every time we enter or leave a node, we attach the node to the end of the sequence (do not attach the virtual root).
\end{itemize}
We use $F(i)$ to denote the $i$-th node in such sequence.
\end{definition} 

\begin{definition} [Subforest]
For $1 \le l \le r \le 2\abs{F} + 1$, we use $F[l, r)$ to denote the forest obtained by removing from $F$ all nodes that appear at least once in $F(1), F(2), \cdots, F(l - 1)$ or $F(r), F(r + 1) \cdots F(2\abs{F})$, and we call such forest a \emph{subforest} of $F$ (as later illustrated in Figure \ref{fig:synchronous}). 
\end{definition}

For a node $u$, $\Left(u)$ equals the first index where $u$ appears in the bi-order traversal sequence and $\Right(u)$ equals \textit{one plus} the second index where $u$ appears in the sequence. By these definitions, we can see that $F[l, r)$ contains a node $u$ if and only if $l \le \Left(u)$ and $\Right(u) \le r$ and that $F[\Left(u), \Right(u))$ is equal to $\sub(u)$. Note that $F[l, r)$ and $F[l ^ {\prime}, r ^ {\prime})$ might be the same forest for distinct pairs $(l, r)$ and $(l ^ {\prime}, r ^ {\prime})$.

We now introduce the definition of ``synchronous subforests,'' named in a similar spirit to the term ``synchronous decomposition'' from the recent $(1 + \eps)$-approximation paper \cite{apxfocs2019}, which decomposes the tree to disconnected components that become connected after adding one more node and its incident edges.
\begin{definition}[Synchronous subforest] \label{def:syn}
    For a forest $F$, a subforest $F ^ {\prime}$ of $F$ is a \emph{synchronous subforest} of $F$ if there exists a node $u$ that is either a node in $F$ or the virtual root of $F$, and $1 \le x \le y \le \childcount(u)$ such that $F ^ {\prime} = \sub(u, [x, y])$.
\end{definition} 
We can see that the node $u$ works as the \emph{virtual root} of $F ^ {\prime}$, and we denote $u$ as $\vroot(F ^ {\prime})$. Figure \ref{fig:synchronous} shows the bi-order traversal sequence of tree $T$. For node $4$, we have $\Left(4) = 5$ and $\Right(4) = 11$. It also shows a synchronous subforest $F = T[3, 15)$ whose virtual root is node 3. Note that node 2 and 3 are not in $F(3, 15)$ since one of their occurrences is not inside the highlighted interval. We can see that this definition is not a one-on-one mapping, since for example $F(5, 15)$ would be identical to $F(3, 15)$.
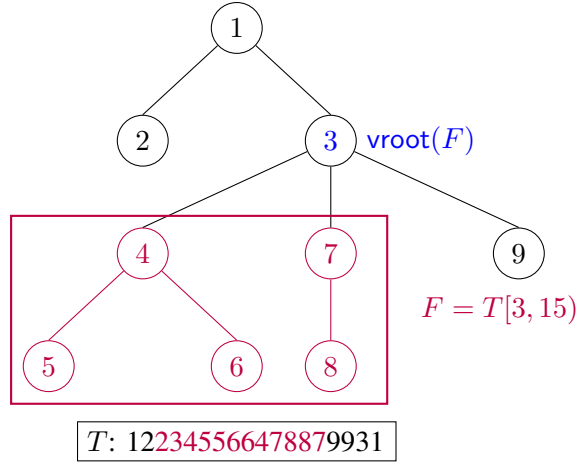
\begin{figure}[H]
    \centering 
    \begin{tikzpicture}[
scale=1,
every node/.style={solid},
edge from parent/.style={draw,solid}, 
level/.style={sibling distance=25mm},
level distance=15mm
]
\node [circle,draw] {$1$}
    [child anchor=north]
    child {node [circle,draw] {$2$} edge from parent[solid]
    }
    child {node [circle,draw,label=right:{\blue $\vroot(F)$}] {\blue $3$} edge from parent[solid]
        child {node [circle,draw, purple] {$4$} edge from parent[solid]
            child {node [circle,draw, purple] {$5$} edge from parent[solid, purple]
            }
            child {node [circle,draw, purple] {$6$} edge from parent[solid, purple]
            };
        }
        child {node [circle,draw, purple] {$7$} edge from parent[solid]
            child {node [circle,draw, purple] {$8$} edge from parent[solid, purple]
            };
        }
        child {node [circle,draw] {$9$}
        };
    };
    \node[draw] at (0,-5.5) {$T$: 12\color{purple}234556647887\color{black}9931};
    \draw[thick, purple] (-3, -2.5) -- (-3, -5) -- (2, -5) -- (2, -2.5) -- cycle;
    \node[purple] at (3.5,-3.75) {$F = T[3, 15)$};
\end{tikzpicture}
\caption{The bi-order traversal sequence and a synchronous subforest of tree $T$} \label{fig:synchronous}
\end{figure}

For two nodes $u$ and $v$ labeled with symbols, let $\delta(u, v)$ be equal to $0$ if $u$ and $v$ have the same symbols and $1$ if their symbols differ. Let $\eta(u, v) = 2 - \delta(u, v)$.

For two nodes $u$ and $v$ such that neither is the ancestor of the other, we have either $\Right(u) \le \Left(v)$ or $\Right(v) \le \Left(u)$, and we say $u$ precedes $v$ if $\Right(u) \le \Left(v)$. For example, node $5$ precedes node $6$ in the tree $T$ in Figure \ref{fig:synchronous}.

\paragraph{Mapping.}
The maximization of the similarity between $F_1$ and $F_2$ can be interpreted as finding a mapping of maximum weight. The mapping we use is identical to the mapping used in Section 2.2 of \cite{ZhangS89}, except for the fact that we are considering similarity. 

The mapping is between two sequences of distinct nodes $\{u_1, u_2, \cdots, u_k\} \in {V(F_1)} ^ k$, $\{v_1, v_2, \cdots, v_k\} \in {V(F_2)} ^ k$, such that for all $1 \le i < j \le k$,
\begin{itemize}
    \item $u_i$ is an ancestor of $u_j$ in $T_1$ if and only if $v_i$ is an ancestor of $v_j$ in $T_2$,
    \item $u_j$ is an ancestor of $u_i$ in $T_1$ if and only if $v_j$ is an ancestor of $v_i$ in $T_2$, and
    \item If neither of $u_i$ and $u_j$ is the ancestor of the other, $u_i$ precedes $u_j$ in $T_1$ if and only if $v_i$ precedes $v_j$ in $T_2$.
\end{itemize}
For each $i$ we map $u_i$ to $v_i$, and the weight of the mapping is
\begin{align*}
    \sum_{1 \le i \le k}{\eta(u_i, v_i)}.
\end{align*}
A node $u \in T_1$ is \emph{mapped} if $u \in \{u_1, u_2, \cdots u_k\}$.

Figure \ref{fig:mapping} shows the mapping of maximum weight between the same $T_1$ and $T_2$ as in Figure \ref{fig:ed}. Nodes with the same subscripts are mapped to each other. We have $\similarity(T_1, T_2) = \abs{T_1} + \abs{T_2} - \ed(T_1, T_2) = 6 + 6 - 3 = 9$, and the mapping indeed has weight 9 since the first pair contributes $1$ to the weight and the remaining 4 pairs contribute $2$ to the weight.

\begin{figure}[H]
    \centering 
    \begin{tikzpicture}[
scale=0.5,
every node/.style={solid},
edge from parent/.style={draw,solid}, 
level/.style={sibling distance=20mm},
level distance=30mm
]
\node [circle,draw,label=above:{$T_1$}] at (0, 0) {$a_1$}
    [child anchor=north]
    child {node [circle,draw] {$b_2$} edge from parent[solid]
    }
    child {node [circle,draw] {$c$} edge from parent[solid]
        child {node [circle,draw] {$b_3$} edge from parent[solid]
        }
        child {node [circle,draw] {$d_4$} edge from parent[solid]
        };
    }
    child {node [circle,draw] {$d_5$} edge from parent[solid]
    };
    
\node [circle,draw,label=above:{$T_2$}] at (8, 0) {$e_1$}
    [child anchor=north]
    child {node [circle,draw] {$b_2$} edge from parent[solid]
    }
    child {node [circle,draw] {$b_3$} edge from parent[solid]
    }
    child {node [circle,draw] {$b$} edge from parent[solid]
        child {node [circle,draw] {$d_4$} edge from parent[solid]
        }
        child {node [circle,draw] {$d_5$} edge from parent[solid]
        };
    };
\end{tikzpicture}
\caption{The mapping of maximum weight between $T_1$ and $T_2$} \label{fig:mapping}
\end{figure}
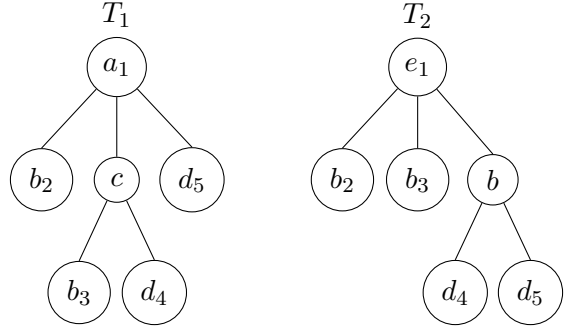

\subsection{Matrix-related definitions and results}
\begin{definition} [Similarity matrix] \label{def:similaritymatrix}
For two forests $F_1$ and $F_2$, let the \emph{similarity matrix} $S(F_1, F_2) = s_{i, j}(F_1)$ be a $(2\abs{F_2} + 1) \times (2\abs{F_2} + 1)$ matrix where
\begin{align*}
    s_{i, j}(F_1) &=
    \begin{cases}
        \similarity(F_1, F_2[i, j))                  & \text{if } i \le j \\
        -\infty                              & \text{if } i > j 
    \end{cases}.
\end{align*}
\end{definition}

For two compatible matrices $A$ and $B$, we use $A \star B$ to denote the \emph{max-plus product} of $A$ and $B$, which is a matrix $C = c_{ij}$ where $c_{ij} = \max_k{\{a_{ik} + b_{kj}\}}$.

An $n \times m$ matrix $A = a_{ij}$ is called \emph{row-monotone} if $a_{i, j} \le a_{i, j + 1}$ for all $i, j$, and \emph{column-monotone} if $a_{i + 1, j} \le a_{i, j}$ for all $i, j$. 

An $n \times n$ matrix $A$ is called \emph{finite-upper-triangular} if the entries below the main diagonal are $-\infty$ and the entries elsewhere are finite, and is called $W$\emph{-bounded-upper-triangular} if $A$ is finite-upper-triangular and the non-$(-\infty)$ entries of $A$ are integers between $0$ and $W$.

\begin{definition}[$W$-bounded-difference] \label{def:bd}
    An $n \times m$ matrix $A = a_{ij}$ is a \emph{$W$-bounded-difference} matrix if for all $i, j$, we have
    \begin{align*}
        \abs{a_{i, j} - a_{i - 1, j}} \le W, \\
        \abs{a_{i, j} - a_{i, j + 1}} \le W. \\
    \end{align*}
    When $W = O(1)$, we say matrix $A$ is a \emph{bounded-difference} matrix.
\end{definition}
A finite-upper-triangular $n \times n$ matrix $M$ is a \emph{finite-upper-triangular-$W$-bounded-difference} matrix if the property in Definition \ref{def:bd} holds for all $i \le j$.

The folloing result by Bringmann et. al. \cite{Bringmann16} is important for our truly sub-cubic running time:
\begin{theorem} [Theorem 1 of \cite{Bringmann16}] \label{theorem:bringmann}
    There is an $O(n ^ {2.8244})$ time randomized algorithm and an $O(n ^ {2.8603})$ time deterministic algorithm that computes the min-plus product of any two $n \times n$ bounded-difference matrices.
\end{theorem}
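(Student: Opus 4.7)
The plan is to exploit the bounded-difference hypothesis to reduce the $(\min,+)$-product to integer matrix multiplication whose effective entry range is substantially smaller than $n$, and then apply fast (rectangular) matrix multiplication. A direct AGM-style monomial embedding $v \mapsto x^v$ applied to $A$ and $B$ would incur overhead $\Theta(n)$ in the polynomial degrees, which cancels any savings from matrix multiplication; bounded-difference must be used to localize the computation.

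First I would perform a block decomposition along the contraction axis. Fix a block size $\Delta$ and partition $[n]$ into $n/\Delta$ contiguous blocks $B_1,\ldots,B_{n/\Delta}$, selecting a pivot $k_\ell^\ast \in B_\ell$ for each block. By the bounded-difference hypothesis, for $k \in B_\ell$ we have $|a_{ik}-a_{i,k_\ell^\ast}|, |b_{kj}-b_{k_\ell^\ast,j}| \le W\Delta$, so
\begin{equation*}
  \min_{k \in B_\ell}\bigl\{a_{ik}+b_{kj}\bigr\}
   = \bigl(a_{i,k_\ell^\ast}+b_{k_\ell^\ast,j}\bigr)
     + \min_{k \in B_\ell}\bigl\{r^A_{ik}+r^B_{kj}\bigr\},
\end{equation*}
where $r^A_{ik} := a_{ik}-a_{i,k_\ell^\ast}$ and $r^B_{kj} := b_{kj}-b_{k_\ell^\ast,j}$ take absolute values $O(\Delta)$. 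The pivot outer-product term is a rank-one update computable in $O(n^2)$ time per pivot, contributing $O(n^3/\Delta)$ in total. Thus the remaining work per block is an $(n \times \Delta) \star (\Delta \times n)$ $(\min,+)$-product with small-magnitude entries.

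To compute each block's small-entry $(\min,+)$-product without paying the $\Theta(\Delta)$ overhead of naive polynomial embedding, I would follow the strategy of \cite{Bringmann16}: combine randomized witness sampling with a single rectangular integer matrix multiplication. The bounded-difference hypothesis is invoked a second time here, to argue that the function $k \mapsto r^A_{ik}+r^B_{kj}$ is Lipschitz in $k$ and hence its minimizer over $B_\ell$ is concentrated, so that a polylogarithmic number of structured matrix products suffices to identify witnesses, with verification in $\tilde O(n^2)$. The deterministic version replaces the sampling step with a derandomization via small-bias spaces, at the cost of a slightly larger matrix-multiplication exponent. Each block then costs $\tilde O(\mathrm{MM}(n,\Delta,n))$ for a total of $\tilde O((n/\Delta)\,\mathrm{MM}(n,\Delta,n))$.

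Finally, balancing $O(n^3/\Delta)$ against $\tilde O((n/\Delta)\,\mathrm{MM}(n,\Delta,n))$ using the best known rectangular matrix multiplication bounds (Le Gall for the randomized branch, Coppersmith for the deterministic branch) yields the claimed exponents $2.8244$ and $2.8603$. The principal obstacle is precisely the removal of the $\Theta(\Delta)$ embedding overhead in the second step, which is the technical heart of the argument and the only reason the cubic barrier breaks; handling negative residuals by a uniform additive shift and propagating the $W$-dependence through the bounds are routine but must be performed carefully.
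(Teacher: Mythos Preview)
This theorem is quoted verbatim from \cite{Bringmann16} and the present paper gives no proof of it whatsoever; it is used purely as a black box (the only remark following the statement is that negating entries yields the max-plus version). There is therefore nothing in the paper to compare your attempt against.

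As for the sketch itself: the first step (block decomposition along the inner index, pivot subtraction to residuals of magnitude $O(W\Delta)$) is indeed the opening move of \cite{Bringmann16}. But your account of the second step is where the real content lies, and the mechanism you describe is not the one that works. The claim that ``$k\mapsto r^A_{ik}+r^B_{kj}$ is Lipschitz in $k$ and hence its minimizer is concentrated'' does not lead anywhere: Lipschitz continuity of a function of a discrete variable says nothing about the location of its minimum, and no polylogarithmic witness-sampling scheme follows from it. What \cite{Bringmann16} actually does is a three-phase algorithm: after the pivot subtraction, a second phase uses the small-entry min-plus product (via the standard polynomial/scaling trick and fast rectangular matrix multiplication) to resolve most $(i,j)$ pairs, and a crucial \emph{third phase} handles the remaining ``uncertain'' pairs combinatorially, relying on a counting lemma for balanced bipartite graphs (Lemma~4 in \cite{Bringmann16}) to bound their total number. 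Your sketch omits this third phase entirely, and without it the $\Theta(\Delta)$ overhead you flag as the ``principal obstacle'' is not removed. The balancing of exponents you describe at the end is consequently not the right one either; the actual optimization is over three terms, not two.
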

By negating all the entries in the matrices, their algorithm can also apply to max-plus products.

\subsection{A computation model for row-monotone, column-monotone matrices} \label{sec:model}
To speed up matrix manipulations, we store and modify our matrices in an implicit way. We now define a computation model for the matrices involved in our algorithm.

We consider the following \emph{range operations} for $n \times m$ row-monotone, column-monotone matrices. Given an $n \times m$ matrix $A = a_{ij}$, we can
\begin{itemize}
    \item Produce a new $n \times m$ matrix $B = [-\infty]_{n, m}$ (i.e. entries of $B$ are all $-\infty$),
    \item Produce a new matrix $B = A$, or
    \item Given $i ^ {\prime}, j ^ {\prime}$ and $x \in \mathbb{N} \cup \{-\infty\}$, produce a new matrix $B = b_{ij}$ such that
    \begin{align*}
        b_{ij} = 
        \begin{cases}
            \max(a_{ij}, x)                     & \text{if } i \le i ^ {\prime} \text{ and } j ^ {\prime} \le j \\
            a_{ij}                              & \text{otherwise}  
        \end{cases}
    \end{align*}
    and denote such matrix as $B = \rangemax(A, i ^ {\prime}, j ^ {\prime}, x)$.
\end{itemize}
We also consider the following \emph{range queries} on an $n \times m$ row-monotone, column-monotone matrix $A$:
\begin{itemize}
    \item Given $i, j$, query $A_{ij}$,
    \item Given $i, x$, query $\mincol(A, i, x) = \min\{j \mid A_{ij} \ge x\}$ or any index in $[1, m]$ if such $j$ does not exist, or
    \item Given $j, x$, query $\maxrow(A, j, x) = \max\{i \mid A_{ij} \ge x\}$ or any index in $[1, n]$ if such $i$ does not exist. 
\end{itemize}
There are well-known data structures (e.g. \emph{persistent 2D segment trees}) that can perform the mentioned range operations and range queries in $\tilde O(1)$ time assuming that all matrices involved are created using our model.

\section{Review of previous algorithms} \label{sec:review}

Per convention, given a forest $F$, let $\ell_F = \Root(L_F)$ and $r_F = \Root(R_F)$. Let $L'_F$ denote $F - R_F$, $R'_F$ denote $F - L_F$, $R_F^{\circ}$ denote $R_F - r_F$ and $L_F^{\circ}$ denote $L_F-\ell_F$.

\paragraph{Zhang and Shasha's algorithm and its decendants.} The current mainstream algorithms for the tree edit distance problem are based on Zhang and Shasha's first $O(n ^ 4)$ algorithm in 1989, assuming that both trees are of size $n$. It computes $\ed(F_1,F_2)$ for two forests $F_1$ and $F_2$ recursively as follows \cite{ZhangS89}:
\begin{itemize}
    \item If either of $F_1,F_2$ is empty, we have
    \begin{equation}
        \ed(F_1, \emptyset) = \abs{F_1}, \ed(\emptyset, F_2) = \abs{F_2}.
    \end{equation}
    \item Otherwise, we recurse with
    \begin{equation} \label{eq:recurse}
        \ed(F_1, F_2)  = \min
        \begin{cases}
            \ed(F_1-r_{F_1},F_2)+1 \\
            \ed(F_1, F_2-r_{F_2})+1 \\
            \ed(R^\circ_{F_1},R^\circ_{F_2}) + \ed(L'_{F_1}, L'_{F_2}) + \delta(r_{F_1},r_{F_2})
        \end{cases}.
    \end{equation}
\end{itemize}
    
Note that our choice of matching from the right side in (\ref{eq:recurse}) is arbitrary. Klein \cite{Klein98} improves the algorithm to $O(n ^ 3\log n)$ time by matching from the side determined by a rule: if $\size(L_{F_1}) > \size(R_{F_1})$, then we still recurse with (\ref{eq:recurse}), but if $\size(L_{F_1}) \le \size(R_{F_1})$, we instead recurse with
\begin{equation}
    \ed(F_1, F_2)  = \min
    \begin{cases}
        \ed(F_1-\ell_{F_1},F_2)+1 \\
        \ed(F_1, F_2-\ell_{F_2})+1 \\
        \ed(L^\circ_{F_1},L^\circ_{F_2}) + \ed(R'_{F_1}, R'_{F_2}) + \delta(\ell_{F_1},\ell_{F_2})
    \end{cases}.
\end{equation}
        
Since an improved strategy in the direction of recursion gives us an improvement in running time, it is natural to ask whether we can further improve this running time by further improving our strategy. We note that in Klein's algorithm, we always make our decision based on the sizes of the trees in $F_1$. In \cite{demaine2007}, a strategy with the idea of switching the roles of $F_1$ and $F_2$ when $\abs{F_2} > \abs{F_1}$ was introduced and proved to be \textit{optimal}, but it only improved the algorithm to $O(n ^ 3)$. This means that no improvement of Zhang and Shasha's algorithm in this way can break the cubic barrier.

\paragraph{Limitation of Zhang and Shasha's dynamic programming scheme.} In fact, it is no surprising that improvement of Zhang and Shasha's dynamic programming scheme cannot give us a sub-cubic running time. Recall that the weighted case of tree edit distance is unlikely to be solvable in truly sub-cubic time \cite{apsphard2020}, so in order to break the cubic barrier in the unweighted case, we need to find an algorithm that has a running time specific to the unweighted case. We note that the transition types in Zhang and Shasha's dynamic programming scheme are rather limited, making it unlikely that some variant of the algorithm could have different running times between the unweighted case and the weighted case. To illustrate our point, let $N = 2\abs{F_2} + 1$. We note that
\begin{align*}
    (L ^ \circ_{F_2}, R'_{F_2}) &= (F_2[1, \Right(\ell_{F_2}) - 1), F_2[\Right(\ell_{F_2}), N)), \\
    (L'_{F_2}, R ^ \circ_{F_2}) &= (F_2[1, \Left(r_{F_2})), F_2[\Left(r_{F_2}) + 1, N)), \\
    F_2 - \ell_{F_2} &= F_2[2, N), \\
    F_2 - r_{F_2} &= F_2[1, N - 1).
\end{align*}
We can see that the transitions only involve $F_2[1, k)$ for $k \in \{\Left(\ell_{F_2}), \Right(\ell_{F_2}) - 1, N - 1\}$, and $F_2[k, N)$ for $k \in \{\Right(r_{F_2}), \Left(r_{F_2}) + 1, 2\}$. Thus only a constant amount of values of $k$ are involved. We now introduce an algorithm that uses more types of transitions.

\paragraph{Chen's algorithm and its relation to our algorithm.} Chen's algorithm from 2001 \cite{chen01} is as follows:
\begin{itemize}
    \item If both $F_1,F_2$ contain only one tree, we recurse with
    \begin{equation}
        \ed(F_1, F_2) = \min
        \begin{cases}
            \ed(F_1-\Root(F_1),F_2) + 1 \\
            \ed(F_1, F_2 - \Root(F_2)) + 1 \\
            \ed(F_1 - \Root(F_1), F_2 - \Root(F_2))+ \delta(\Root(F_1), \Root(F_2))
        \end{cases}.
    \end{equation}
    \item If $F_1$ is a forest and $F_2$ is a tree, we recurse with
    \begin{equation}
        \ed(F_1, F_2) = \min
        \begin{cases}
            \ed(F_1,F_2-\Root(F_2)) + 1 \\
            \ed(R_{F_1}, F_2) + \abs{F_1} - \abs{R_{F_1}} \\
            \ed(F_1 - R_{F_1}, F_2) + \abs{R_{F_1}}
        \end{cases}.
    \end{equation}
    \item If $F_1$ is a tree and $F_2$ is a forest, we recurse with
    \begin{equation}
        \ed(F_1, F_2) = \min
        \begin{cases}
            \ed(F_1 - \Root(F_1), F_2) + 1 \\
            \operatorname*{min}\limits_{\textrm{tree } T \in F_2} {\{\ed(F_1, T) + \abs{F_2} - \abs{T}\}}
        \end{cases}.
    \end{equation}
    \item If both $F_1$ and $F_2$ are forests, let $\Delta(F_2, x) = \abs{F_2} - \abs{F_2[1, x)} - \abs{F_2[x, N)}$, and we recurse with
    \begin{equation} \label{eq:leaves}
        \ed(F_1, F_2)  = \min
        \begin{cases}
            \ed(R_{F_1}, F_2) + \abs{F_1} - \abs{R_{F_1}} \\
            \ed(F_1 - R_{F_1}, F_2) + \abs{R_{F_1}} \\
            \operatorname*{min}\limits_{l \in \textrm{leaves}(F_2)} {\{\ed(F_1 - R_{F_1}, F_2[1, \Right(l) + 1)) + \ed(R_{F_1}, F_2[\Right(l) + 1, N)) + \Delta(F_2, \Right(l) + 1)\}} 
        \end{cases}.
    \end{equation}
\end{itemize}
Chen's algorithm is notably efficient when the number of leaves in the input forests is small, but has a running time of $O(n ^ 4)$ in the general case. Chen himself erroneously claimed a running time of $O(n ^ {3.5})$ in \cite{chen01}. His mistake was only pointed out in a recent work by Schwarz, Pawlik and Augsten \cite{schwarz2017}. Despite its slow running time, Chen's algorithm has more general transition types compared to Zhang and Shasha's algorithm and its descendants --- $F_2[1, k)$ and $F_2[k, N)$ are involved for much more possible values of $k$ in (\ref{eq:leaves}).

We now show how Chen's dynamic programming scheme relates to ours, which will be fully introduced in Section \ref{sec:dp}. The significance of considering the equivalent maximization problem on similarity is that if we write out the equivalent transition of (\ref{eq:leaves}) for similarity, we get
\begin{equation} \label{eq:leavessim}
    \similarity(F_1, F_2)  = \max
    \begin{cases}
        \similarity(R_{F_1}, F_2) \\
        \similarity(F_1 - R_{F_1}, F_2) \\
        \operatorname*{max}\limits_{l \in \textrm{leaves}(F_2)} {\{\similarity(F_1 - R_{F_1}, F_2[1, \Right(l) + 1)) + \similarity(R_{F_1}, F_2[\Right(l) + 1, N))\}}
    \end{cases},
\end{equation}
where no extra terms involving sizes of subforests are present. We can further generalize this into
\begin{equation} \label{eq:leavesmul}
    \similarity(F_1, F_2)  = \operatorname*{max}\limits_{1 \le k \le N}{\{\similarity(F_1 - R_{F_1}, F_2[1, k))) + \similarity(R_{F_1}, F_2[k, N))\}}.
\end{equation}
The transition can now be written as max-plus product: $S(F_1, F_2) = S(F_1 - R_{F_1}, F_2) \star S(R_{F_1}, F_2)$. The price we pay is that the new dynamic programming scheme is no longer as efficient when the number of leaves in $F_2$ is bounded, but this is an affordable loss since we consider the general case where there is no limit on the number of leaves.

Finally, we want to point out that although our algorithm is the first truly sub-cubic algorithm for the unweighted tree edit distance problem, it is not the first ever to relate tree edit distance with matrix multiplication, since Chen himself was able to reduce his dynamic programming scheme to min-plus product using a different approach \cite{chen01, schwarz2017}.

\section{Our algorithm} 
\label{sec:algorithm}

In this section we introduce our algorithm in detail. In Section \ref{sec:dp} we introduce the dynamic programming scheme we start with and a novel cubic algorithm based on this scheme, and motivate the decomposition scheme that we will use to break the cubic barrier. Section \ref{sec:decomposition} introduces our main algorithm based on that decomposition scheme, which involves two different types of transitions. Section \ref{sec:transition} introduces how the type II transitions between two synchronous subforests with small difference in sizes can be done. Finally, Section \ref{sec:mul} introduces how the type I transitions which require more efficient computation of max-plus products can be done via a reduction to max-plus product between bounded-difference matrices.

We will assume that the input forests are two trees $T_1$ and $T_2$. It is easy to extend our algorithm to generic forests. We also assume that $\abs{T_1} \ge \abs{T_2}$. We will use the shorthand $S(F) = S(F, T_2)$ for any forest $F$ since the second argument will always be $T_2$ in our algorithm.

\subsection{A novel cubic algorithm} \label{sec:dp}

\subsubsection{Our dynamic programming scheme}  Given a forest $F$, we use the following dynamic programming scheme to compute $S(F)$: 
\begin{itemize}
    \item Basic case: $S(\emptyset)$ is a $(2\abs{T_2} + 1) \times (2\abs{T_2} + 1)$ finite-upper-triangular matrix whose entries on or above the main diagonal are all zero. 
    \item If $F$ contains one tree, let $u = \Root(F)$ and we recurse with
    \begin{align} \label{eq:dp1}
        s_{i, j}(F)       = \max
        \begin{cases}
            s_{i, j}(F - u) \\
            \operatorname*{max}\limits_{v \in T_2[i, j)}{\{s_{\Left(v) + 1, \Right(v) - 1}(F - u) + \eta(u, v)\}}
        \end{cases}.
    \end{align}
    \item Otherwise, as justified at the end of Section \ref{sec:review} (Equations \ref{eq:leavessim} and \ref{eq:leavesmul}), we recurse with
    \begin{align} \label{eq:dp2}
        S(F) = S(F - R_F) \star S(R_F).
    \end{align}
\end{itemize}

One can see that to compute $S(T_1)$, our dynamic programming scheme recursively computes exactly $S(\sub(u))$ and $S(\sub(u, [1, k]))$ for all $u \in T_1, 2 \le k \le \childcount(u)$, which gives us $O(\abs{T_1})$ similarity matrices in total.

Since we need to compute max-plus products between $(2\abs{T_2} + 1) \times (2\abs{T_2} + 1)$ matrices $O(\abs{T_1})$ times, the total running time is $\tilde O(\abs{T_1}\abs{T_2} ^ 3)$. To show that our new dynamic programming approach is promising, we now show how to improve the running time to cubic by exploiting the properties of the matrices using simple combinatorial methods.

\subsubsection{Properties of similarity matrices} We note that for any forest $F$, $S(F)$ is row-monotone and column-monotone: for $i ^ {\prime} \le i \le j \le j ^ {\prime}$, $T_2[i, j) \subset T_2[i ^ {\prime}, j ^ {\prime})$ and a mapping from $F$ to $T_2[i, j)$ is also a mapping from $T_2[i ^ {\prime}, j ^ {\prime})$. We also note that $S(F)$ is $2\min(\abs{F}, \abs{T_2})$-bounded-upper-triangular: there are at most $\min(\abs{F}, \abs{T_2})$ mapped pairs in a mapping between $F$ and $T_2$ and each pair only contributes at most 2 to the answer. Finally, we show the following:
\begin{lemma} \label{lemma:bd}
    For any forest $F$, $S(F)$ is a finite-upper-triangular-2-bounded-difference matrix.
\end{lemma}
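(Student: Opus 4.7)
The plan is to reduce the claim to two independent sub-facts about (a) how subforests $T_2[l,r)$ change as the boundaries $l,r$ shift by one, and (b) how similarity changes when a single node is added to or removed from one argument. Concretely, for $i \le j$ with both neighbors in range, it suffices to prove $|s_{i,j}(F) - s_{i-1,j}(F)| \le 2$ and $|s_{i,j}(F) - s_{i,j+1}(F)| \le 2$; finite-upper-triangularity is immediate from Definition \ref{def:similaritymatrix}.

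First, I would show that $T_2[i-1,j)$ contains at most one extra node compared to $T_2[i,j)$, and symmetrically that $T_2[i,j+1)$ contains at most one extra node compared to $T_2[i,j)$. The characterization recalled in the preliminaries says that $u \in T_2[l,r)$ iff $l \le \Left(u)$ and $\Right(u) \le r$. Thus the symmetric difference $T_2[i-1,j) \setminus T_2[i,j)$ consists exactly of nodes $u$ with $\Left(u) = i-1$ and $\Right(u) \le j$. Since $\Left(u)$ is the position of the first occurrence of $u$ in the bi-order traversal sequence, at most one node has $\Left(u) = i-1$ (namely $T_2(i-1)$, if its second occurrence falls at some index less than $j$). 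The analogous statement for the right boundary follows by the same argument applied to $\Right$.

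Next, I would show that similarity is 2-Lipschitz with respect to adding or removing a single node from one argument. Using the mapping interpretation from the preliminaries, a mapping witnessing $\similarity(F, F')$ is a set of pairs subject only to pairwise ancestor/precedence constraints. If $F' \subseteq F''$ differ by a single node $v$, then:
\begin{itemize}
    \item Any optimal mapping for $F'$ is also a valid mapping for $F''$ (the node set only grew, and the ancestor/precedence relations among surviving nodes are inherited), so $\similarity(F, F'') \ge \similarity(F, F')$.
    \item Conversely, take an optimal mapping for $F''$. If $v$ is unmatched, the same mapping works for $F'$; otherwise $v$ appears in exactly one pair $(u, v)$, and removing this pair yields a set of pairs whose targets all lie in $F'$. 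Because every constraint in the mapping definition is pairwise, the remaining pairs still form a valid mapping, and its weight dropped by $\eta(u,v) \le 2$. Hence $\similarity(F, F') \ge \similarity(F, F'') - 2$.
\end{itemize}
Combining these two inequalities gives $|\similarity(F, F') - \similarity(F, F'')| \le 2$ whenever $F'$ and $F''$ differ by a single node.

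Chaining the two parts finishes the proof: shifting $i$ or $j$ by one changes $T_2[i,j)$ by at most one node, which changes the similarity by at most $2$. The only real subtlety is verifying that deleting one pair from a valid mapping leaves a valid mapping, which I expect to be the main (but very mild) obstacle; it follows cleanly because all three mapping conditions in the preliminaries are preserved under taking subsets of pairs. No issues arise at the boundary $i = j$ because $s_{i,i}(F) = \similarity(F, \emptyset) = |F|$ is finite and the same inequalities apply.
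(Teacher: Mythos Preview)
Your proof is correct and follows essentially the same approach as the paper's: both arguments observe that shifting a boundary by one changes the subforest by at most one node, and then remove the single matched pair containing that node (if any) from an optimal mapping to lose at most $2$. You simply make the two sub-facts more explicit and state the monotone direction separately, whereas the paper compresses them into a single paragraph.

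One small slip that does not affect the argument: you write $s_{i,i}(F) = \similarity(F,\emptyset) = \abs{F}$, but in fact $\similarity(F,\emptyset) = \abs{F} + 0 - \ed(F,\emptyset) = 0$.
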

\begin{proof}
    We show $\similarity(F, T_2[i, j + 1)) \le \similarity(F, T_2[i, j)) + 2$. The other direction is similar. If $T_2[i, j + 1) = T_2[i, j)$ then $\similarity(F, T_2[i, j + 1)) = \similarity(F, T_2[i, j))$. Otherwise let $u$ be the unique node in $T_2[i, j + 1) \backslash T_2[i, j)$. If $u$ is not mapped in the mapping that maximizes $\similarity(F, T_2[i, j + 1))$, then the mapping is also a valid mapping between $F$ and $T_2[i, j)$ and we have $\similarity(F, T_2[i, j + 1)) \le \similarity(F, T_2[i, j))$. Otherwise if $u$ maps to $v \in T_2[i, j)$, remove $u$ and $v$ from the mapping and we get a valid mapping between $F$ and $T_2[i, j)$. Therefore we have $\similarity(F, T_2[i, j + 1)) - \eta(u, v) \le \similarity(F, T_2[i, j))$. Since $\eta(u, v) \le 2$, we have $\similarity(F, T_2[i, j + 1)) \le \similarity(F, T_2[i, j)) + 2$.
\end{proof}

\subsubsection{Optimization to cubic} 
We now show how to compute the similarity matrices involved in our dynamic programming scheme in cubic time, relying on the fact that $S(F)$ is row-monotone, column-monotone and $2\abs{F}$-bounded-upper-triangular. Our main sub-cubic algorithm will also rely on the finite-upper-triangular-bounded-difference property, where the cubic algorithm here is used as a sub-routine for obtaining similarity matrices for small sub-forests of $T_1$. 

We first prove the following theorem, where we are using the range operation/query model from Section \ref{sec:model}.
\begin{theorem} \label{theorem:cubic}
    For a forest $F$, $S(F)$ can be computed in $\tilde O(\abs{F} ^ 2\abs{T_2})$ time.
\end{theorem}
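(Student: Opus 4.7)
The plan is to implement the two primitive operations of the recursion directly on top of the implicit row-/column-monotone matrix representation from Section \ref{sec:model}, bound each operation's cost individually, and then sum across the $O(\abs{F})$ operations the recursion generates. The two primitives are the \emph{add-root} step (Equation \ref{eq:dp1}, when $F$ is a single tree with root $u$) and the \emph{concatenating} max-plus product (Equation \ref{eq:dp2}, when $F$ has multiple trees).

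The add-root step is the easy case: starting from a copy of $S(F-u)$, I would, for each node $v \in T_2$, apply a single $\rangemax$ update of value $s_{\Left(v)+1,\,\Right(v)-1}(F-u) + \eta(u,v)$ to all entries $(i,j)$ with $i \le \Left(v)$ and $\Right(v) \le j$, which is exactly the set of $(i,j)$ with $v \in T_2[i,j)$. This faithfully realizes the right-hand side of Equation \ref{eq:dp1} and costs $\tilde O(m)$ per add-root.

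The main obstacle is the max-plus product $C = A \star B$ with $A = S(F - R_F)$ and $B = S(R_F)$. Here I would exploit that both factors are row-monotone and column-monotone and that their finite entries lie in $[0,\,2\min(\abs{F - R_F},m)]$ and $[0,\,2\min(\abs{R_F},m)]$. Starting from $C = [-\infty]$, for every pair $(x,y)$ in those ranges and every row index $i$, I would compute $k^\ast = \mincol(A, i, x)$ and $j^\ast = \mincol(B, k^\ast, y)$ and apply one $\rangemax$ update raising all entries $(i',j')$ with $i' \le i$ and $j' \ge j^\ast$ to at least $x + y$. Soundness follows from monotonicity: column-monotonicity of $A$ gives $a_{i'k^\ast} \ge a_{i k^\ast} \ge x$ and row-monotonicity of $B$ gives $b_{k^\ast, j'} \ge b_{k^\ast, j^\ast} \ge y$, so the value $x + y$ is actually realized at index $k = k^\ast$; completeness follows because if the true maximum $c_{ij}$ is realized at some $k_0$ with $a_{ik_0} = x_0$ and $b_{k_0,j} = y_0$, then at iteration $(x_0, y_0, i)$ we have $k^\ast \le k_0$ by definition of $\mincol$, and column-monotonicity of $B$ yields $b_{k^\ast, j} \ge b_{k_0, j} = y_0$ and hence $j^\ast \le j$, so the update reaches $(i,j)$. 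Since $k^\ast \ge i$ and $j^\ast \ge k^\ast$, the update never writes below the diagonal. The iteration count is $O(\min(\abs{F - R_F}, m) \cdot \min(\abs{R_F}, m) \cdot m)$, which a short case split on whether each factor exceeds $m$ confirms is $O(\abs{F - R_F} \cdot \abs{R_F} \cdot m)$; each iteration is $\tilde O(1)$ in the computation model, so the product costs $\tilde O(\abs{F - R_F}\,\abs{R_F}\,m)$.

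To aggregate, the $O(\abs{F})$ add-root operations contribute $\tilde O(\abs{F} m)$ in total. For the products, $\sum_{\text{products}} \abs{F_1}\abs{F_2}$ counts (up to a factor of two) ordered pairs of distinct nodes of $F$ that end up on opposite sides of some concatenation; each unordered pair $\{x,y\}$ with neither node an ancestor of the other is split at exactly one product --- the one performed at the level of $\mathrm{LCA}(x,y)$ when the child-subtrees containing $x$ and $y$ are first merged --- so this sum is at most $\abs{F}^2$. Multiplying by $m$ gives $\tilde O(\abs{F}^2 m)$ for all products, and adding the add-root contribution preserves this bound, yielding the claimed $\tilde O(\abs{F}^2 m)$ running time.
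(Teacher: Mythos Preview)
Your proof is correct and follows essentially the same approach as the paper: the paper's Lemma~\ref{lemma:mulbasic} iterates over columns $j$ and values $x,y$ using $\maxrow$, while you iterate over rows $i$ and values $x,y$ using $\mincol$, which is the symmetric variant of the same idea; the add-root step and the $\sum \abs{F_1}\abs{F_2} \le \abs{F}^2$ aggregation argument are identical. One small wrinkle: by the spec of $\mincol$ in Section~\ref{sec:model}, when no column achieves the threshold the query may return an arbitrary index, so your soundness claim $a_{i,k^\ast}\ge x$ can fail for large $x$; this is harmless if you update with the actual value $A_{i,k^\ast}+B_{k^\ast,j^\ast}$ rather than $x+y$ (as the paper's Algorithm~\ref{algorithm:mulbasic} does), since then monotonicity alone guarantees soundness regardless of which indices are returned.
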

Note the apparently nonsensical running time: if $\abs{F} = o(\abs{T_2} ^ {0.5})$ , then $S(F)$ can be computed in $o(\abs{T_2} ^ 2)$ time, smaller than the number of entries in $S(F)$. This is because we are using the model in Section \ref{sec:model}, and the running time here means that we can answer all the defined range queries on $S(F)$ after some $o(\abs{T_2} ^ 2)$ time pre-processing.

In order to prove Theorem \ref{theorem:cubic}, we will need to show the following:
\begin{lemma} \label{lemma:mulbasic}
    Let $A, B$ be $n \times n$ row-monotone, column-monotone matrices. If $A$ is $m_A$-bounded-upper-triangular and $B$ is $m_B$-bounded-upper-triangular, then $C = A \star B$ can be computed in $\tilde O(m_Am_Bn)$ time.
\end{lemma}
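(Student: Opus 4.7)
The plan is to exploit the bounded-upper-triangular property of $A$ and $B$ by working with their (finitely many) level sets, reducing the multiplication to a bulk sequence of $\rangemax$ updates on $C$ within the computation model of Section~\ref{sec:model}.

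First I would define, for each row $i$ of $A$ and each level $s \in \{0, 1, \ldots, m_A\}$, the breakpoint $k_A(i, s) = \min\{k : a_{ik} \ge s\}$ (treating an empty set as $+\infty$), and symmetrically $k_B(j, t) = \max\{k : b_{kj} \ge t\}$ for each column $j$ of $B$ and each $t \in \{0, 1, \ldots, m_B\}$. The row- and column-monotonicity of $A$ force $\{k : a_{ik} \ge s\}$ to be a suffix interval and $k_A(i, s)$ to be non-decreasing in $i$; symmetrically, $\{k : b_{kj} \ge t\}$ is a prefix and $k_B(j, t)$ is non-decreasing in $j$. All these breakpoints can be filled in using $\mincol$ queries on $A$ and $\maxrow$ queries on $B$ at $\tilde O(1)$ each, for a total cost of $\tilde O(n(m_A + m_B))$. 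The next step is to verify the identity
\[
    c_{ij} = \max\bigl\{s + t : s \in \{0,\ldots,m_A\},\ t \in \{0,\ldots,m_B\},\ k_A(i, s) \le k_B(j, t)\bigr\},
\]
which holds because $a_{ik} \ge s$ and $b_{kj} \ge t$ at a common $k$ is equivalent to the intersection $[k_A(i,s), n] \cap [1, k_B(j,t)]$ being nonempty, and every finite entry being a bounded integer ensures the max is attained at an integer level pair.

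Given this reformulation, I would initialize $C = [-\infty]_{n, n}$ and iterate over all $(m_A+1)(m_B+1)$ pairs $(s, t)$, applying $\rangemax$ updates with value $s + t$ to the staircase region $R_{s, t} = \{(i, j) : k_A(i, s) \le k_B(j, t)\}$. Monotonicity makes $R_{s, t}$ tractable: the threshold $I(j, s, t) = \max\{i : k_A(i, s) \le k_B(j, t)\}$ is non-decreasing in $j$, so a single left-to-right scan over $j$ produces at most $n$ ``corners'' $(i_p, j_p)$ where $I$ first reaches a new value $i_p$ at column $j_p$. I would then issue one $\rangemax(\cdot,\,i_p,\,j_p,\,s+t)$ per corner; monotonicity of $I$ in $j$ simultaneously guarantees that each such top-right quadrant stays inside $R_{s, t}$ and that their union covers $R_{s, t}$ exactly.

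Adding up, the update phase performs $O(n)$ range operations per $(s, t)$, for a total of $\tilde O(n m_A m_B)$ cost at $\tilde O(1)$ per operation; the preprocessing of $k_A$ and $k_B$ is dominated by this. The step I expect to require the most care is the staircase-decomposition argument---specifically, checking that the top-right-quadrant shape of $\rangemax$ tiles $R_{s, t}$ exactly, neither overshooting nor leaving gaps---since this is the one place where the argument must combine both monotonicity directions of $k_A$ and $k_B$ with the specific geometry of the allowed range operation.
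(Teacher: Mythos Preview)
Your argument is correct and shares the paper's core insight: boundedness caps the number of relevant value levels at $O(m_A)$ and $O(m_B)$, so $\tilde O(n m_A m_B)$ many $\rangemax$ updates suffice once monotonicity pins down the affected regions. The organization differs, however. The paper's Algorithm~\ref{algorithm:mulbasic} loops over triples $(j, x, y)$ and for each issues a single $\rangemax$ at the point $(i,j)$ found via two on-the-fly $\maxrow$ queries, using the actual value $A_{i,k}+B_{k,j}$; correctness is argued by showing every ``useful triple'' $(i,k,j)$ gets covered. You instead loop over level pairs $(s,t)$, establish the identity $c_{ij}=\max\{s+t:k_A(i,s)\le k_B(j,t)\}$, and tile each staircase $R_{s,t}$ with at most $n$ corner updates carrying value $s+t$. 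Your route precomputes the breakpoint tables $k_A,k_B$ but yields a self-contained correctness proof via the level-set identity; the paper stays entirely within on-the-fly range queries but relies on the slightly ad hoc useful-triple analysis. Both land on the same $\tilde O(n m_A m_B)$ bound, and neither is strictly simpler.
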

\begin{proof} [Proof of Theorem \ref{theorem:cubic} from Lemma \ref{lemma:mulbasic}]
    Consider applying Lemma \ref{lemma:mulbasic} to our dynamic programming scheme. Every time we multiply the similarity matrices of two subforests $F_1$ and $F_2$, we contribute $\tilde O(\abs{F_1}\abs{F_2}\abs{T_2})$ to the total running time. We know the sum of $\abs{F_1}\abs{F_2}$ across all multiplications is $O(\abs{F} ^ 2)$ from a classic argument: $\abs{F_1}\abs{F_2}$ is the number of pairs of nodes $(x, y)$ where $x \in F_1$ and $y \in F_2$ and each $(x, y)$ pair will only be counted once. Therefore the total running time is $\tilde O(\abs{F} ^ 2\abs{T_2})$. 
\end{proof}
This proof also implies the following corollary, which will be used again in our main truly sub-cubic algorithm:
\begin{cor} \label{cor:cubic}
    Let $\{F_1, F_2, \cdots, F_k\}$ be a sequence of forests. For $1 \le l \le k$, let $G_l = F_1 + F_2 + \cdots + F_l$ be the $l$-th prefix sum. Then the sequence of similarity matrices of prefix sums $\{S(G_1), S(G_2) \cdots S(G_k)\}$ can be computed in $\tilde O(\abs{G_k} ^ 2\abs{T_2})$ time. The same result holds for suffix sums.
\end{cor}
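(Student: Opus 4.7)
The plan is a direct reduction to Theorem \ref{theorem:cubic} combined with an incremental application of Lemma \ref{lemma:mulbasic}. First, for each $l$, I would invoke Theorem \ref{theorem:cubic} individually on $F_l$ to obtain $S(F_l)$ in $\tilde O(\abs{F_l}^2 \abs{T_2})$ time. Since $\sum_l \abs{F_l}^2 \le (\sum_l \abs{F_l})^2 = \abs{G_k}^2$, this preprocessing phase costs $\tilde O(\abs{G_k}^2 \abs{T_2})$ in total.

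Next, I would build the prefix matrices incrementally using the max-plus identity $S(G_l) = S(G_{l-1}) \star S(F_l)$, which is valid by the same reasoning that justified Equation (\ref{eq:leavesmul}) in Section \ref{sec:review} (a node of $G_{l-1}$ and a node of $F_l$ can never be ancestor/descendant of one another, so the Chen-style leaf-split max-plus transition applies). Each similarity matrix is row-monotone, column-monotone, and bounded-upper-triangular by the properties noted just above Lemma \ref{lemma:bd}: in particular $S(G_{l-1})$ is $2\abs{G_{l-1}}$-bounded-upper-triangular and $S(F_l)$ is $2\abs{F_l}$-bounded-upper-triangular. Lemma \ref{lemma:mulbasic} therefore computes each product in $\tilde O(\abs{G_{l-1}} \abs{F_l} \abs{T_2})$ time.

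To bound the total cost of the $k-1$ products, I would apply the same combinatorial accounting used in the proof of Theorem \ref{theorem:cubic}: observe that $\sum_{l} \abs{G_{l-1}} \abs{F_l} = \sum_{l}\sum_{i < l} \abs{F_i}\abs{F_l}$ counts each unordered pair of nodes belonging to distinct forests $F_i, F_l$ at most once, hence is bounded by $\binom{\abs{G_k}}{2} = O(\abs{G_k}^2)$. Multiplying by $\tilde O(\abs{T_2})$ yields $\tilde O(\abs{G_k}^2 \abs{T_2})$ for all products combined, matching the preprocessing cost. The suffix-sum case is symmetric: set $H_l = F_l + \cdots + F_k$ and compute $S(H_l) = S(F_l) \star S(H_{l+1})$ with the identical accounting.

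There is no real obstacle here; all of the heavy lifting is already packaged in Theorem \ref{theorem:cubic} and Lemma \ref{lemma:mulbasic}. The only subtle point worth flagging is that the bounded-upper-triangular parameter of $S(G_{l-1})$ scales with $\abs{G_{l-1}}$ rather than with $\abs{T_2}$, which is precisely what lets the per-multiplication cost telescope into the classic $O(\abs{G_k}^2)$ double-counting bound instead of a weaker estimate.
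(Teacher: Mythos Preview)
Your proposal is correct and is essentially the same argument the paper has in mind: the paper simply remarks that the proof of Theorem~\ref{theorem:cubic} already yields the corollary, since running the dynamic programming scheme on $G_k$ computes every left-to-right prefix along the way and the total cost is bounded by the same pair-counting argument you spell out. Your two-phase presentation (first compute each $S(F_l)$ via Theorem~\ref{theorem:cubic}, then chain them via Lemma~\ref{lemma:mulbasic}) just makes explicit what that single DP pass does implicitly.
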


To prove Lemma \ref{lemma:mulbasic}, Algorithm \ref{algorithm:mulbasic} computes $C = A \star B$ in $\tilde O(m_Am_Bn)$ time.
\begin{algorithm} [H]
    \caption{Computation of $C = A \star B$ in Lemma \ref{lemma:mulbasic}} \label{algorithm:mulbasic}
    \begin{algorithmic}[1]
        \Procedure{MUL1}{$A, B$}
            \State $C \gets [-\infty]_{n, n}$
            \For {$j \in [1, n]$}
                \For {$x \in [0, m_B]$}
                    \For {$y \in [0, m_A]$}
                        \State $k \gets \maxrow(B, j, x)$
                        \State $i \gets \maxrow(A, k, y)$
                        \State $C \gets \rangemax(C, i, j, A_{i, k} + B_{k, j})$ \label{line:covertriple}
                    \EndFor
                \EndFor
            \EndFor
        \EndProcedure
    \end{algorithmic}
\end{algorithm}
To show its correctness, a triple $(i, k, j)$ is \textit{useful} if $A_{i, k} + B_{k, j} \ge C_{i, j}$ and $\max(A_{i + 1, k} + B_{k, j}, A_{i, k + 1} + B_{k + 1, j}) < C_{i, j}$. A triple $(i, k, j)$ is \textit{covered} by the algorithm if $\rangemax(C, i, j, A_{i, k} + B_{k, j})$ has been produced on line \ref{line:covertriple}. It suffices to show that all useful triples are covered. We can see that a triple $(i, k, j)$ cannot be useful when
\begin{itemize}
    \item $A_{i, k} = A_{i + 1, k}$, since if $A_{i, k} + B_{k, j} \ge C_{i, j}$, then $A_{i + 1, k} + B_{k, j} = A_{i, k} + B_{k, j} \ge C_{i, j}$, or
    \item $B_{k, j} = B_{k + 1, j}$, since if $A_{i, k} + B_{k, j} \ge C_{i, j}$, then $A_{i, k + 1} + B_{k + 1, j} \ge A_{i, k} + B_{k + 1, j} = A_{i, k} + B_{k, j} \ge C_{i, j}$.
\end{itemize}
Therefore, $A_{i, k}$ must be the last occurrence of its value on column $k$ of $A$ and $B_{k, j}$ must be the last occurrence of its value on column $j$ of $B$. We can now see that the algorithm indeed covers all useful triples.

Note that in our dynamic programming scheme we also need to deal with the case of Equation \ref{eq:dp1}. This can easily be done in $\tilde O(\abs{T_2})$ time by first initializing $S(F)$ with $S(F - u)$ and then enumerating the $O(\abs{T_2})$ nodes $v \in T_2[i, j)$ and replacing $S(F)$ with $\rangemax(S(F), l(v), r(v), s_{\Left(v) + 1, \Right(v) - 1}(F - u) + \eta(u, v))$.

Finally, it seems that the logarithmic factors in our cubic algorithm can be removed by using some more time-efficient way to store the matrices (e.g. by maintaining two 2D tables for a similarity matrix $A$ storing the values of $\mincol(A, i, x)$ and $\maxrow(A, j, x)$). We do not give the full details here since we focus on polynomial improvements.

\paragraph{Breaking the cubic barrier?} Unfortunately, the running time of our dynamic programming scheme cannot be improved to truly sub-cubic in a straightforward way. Consider the amount of information necessary to represent $S(\sub(u))$ for some $u \in T_1$. Intuitively, for each row $i$, we need the value of $\mincol(S(\sub(u)), x)$ for every $x \in [0, 2\min(\abs{T_2}, \abs{\sub(u)})]$, which gives $O(\abs{T_2}\min(\abs{T_2}, \abs{\sub(u)}))$ different $(i, x)$ pairs. Since the sum of $\min(\abs{T_2}, \abs{\sub(u)})$ across all $u \in T_1$ is $O(\abs{T_1}\abs{T_2})$, the total amount of information needed to represent the similarity matrices for subtrees of every node in $T_1$ is already $O(\abs{T_1}\abs{T_2} ^ 2)$. It seems impossible for us to be able to find a more efficient way to represent all these similarity matrices. Therefore, in order to break the cubic barrier, we cannot compute $S(\sub(u))$ one by one for every $u \in T_1$. In our main algorithm, we use a \emph{decomposition scheme} to decompose the transitions into blocks in order to skip some of the nodes in $T_1$.

\subsection{Main algorithm} \label{sec:decomposition}

\subsubsection{Overview}
Our main algorithm is based on the following decomposition scheme: for a block size $\Delta$, we decompose the computation of $S(T_1)$ into $O(\abs{T_1} / \Delta)$ transitions between synchronous subforests (Definition \ref{def:syn}) of the following two types:
\begin{itemize}
    \item Type I: transition from two synchronous subforests $F_1$ and $F_2$ to $F_1 + F_2$, where $F_1 + F_2$ is also a synchronous subforest and both $\abs{F_1}$ and $\abs{F_2}$ are no less than $\Delta$.
    \item Type II: transition from synchronous forest $F_1$ to synchronous forest $F_2$ such that $F_1 \subset F_2$ and $\abs{F_2} - \abs{F_1} = O(\Delta)$.
\end{itemize}

For the type I transitions, in Section \ref{sec:mul} we will show the following theorem.
\begin{theorem} \label{theorem:mul}
    Let $A, B$ be row-monotone, column-monotone and finite-upper-triangular-bounded-difference $n \times n$ matrices whose entries on the main diagonals are zero. If $A$ is $m$-bounded-upper-triangular, then $C = A \star B$ can be computed in $\MUL(m, n) = \tilde O(m ^ {0.9038}n ^ 2)$ randomized and $\MUL(m, n) = \tilde O(m ^ {0.9250}n ^ 2)$ deterministic time. 
\end{theorem}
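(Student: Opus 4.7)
The plan is to first build a subroutine $\textsc{Sub}$ that handles max-plus products of matrices with no $-\infty$ entries below the main diagonal (i.e., fully finite bounded-difference matrices), and then reduce the upper-triangular problem to $\textsc{Sub}$ via a divide-and-conquer recursion on the matrix dimension. Inside $\textsc{Sub}$, I partition the index set into blocks of a size $s$ chosen as a function of $m$ and $n$, and apply Theorem~\ref{theorem:bringmann} to each block-pair sub-product; within any block the bounded-difference property restricts the spread of entries to $O(s)$ (or $O(m)$, if smaller, for blocks of the $m$-bounded factor $A$), so after subtracting the block minimum the sub-matrices lie in the regime of Theorem~\ref{theorem:bringmann}. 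The block size will be tuned to balance the number of block triples against the per-call Bringmann cost, exploiting the $m$-boundedness of $A$ to shrink the effective dimensions of the Bringmann calls.

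For the outer recursion, I split the index set $[1,n]$ at its midpoint, producing quadrants $A_{11},A_{12},A_{22}$ of $A$ (with $A_{21}$ identically $-\infty$) and analogously for $B$. Then $C_{11}=A_{11}\star B_{11}$ and $C_{22}=A_{22}\star B_{22}$ are instances of the theorem at half the size and can be handled recursively, while $C_{12}$ is the entrywise maximum of the two mixed products $A_{11}\star B_{12}$ and $A_{12}\star B_{22}$. Each mixed product has exactly one upper-triangular factor $P$. I extend $P$ below its diagonal by setting $\tilde{P}[i][k]=-2(i-k)$ for $k<i$, which preserves $2$-bounded-difference thanks to $P[i][i]=0$, and yields a fully finite bounded-difference matrix. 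The crucial observation is that the other factor $Q$ satisfies $Q[k][j]-Q[i][j]\le 2(i-k)$ by the $2$-bounded-difference property from Lemma~\ref{lemma:bd}, and therefore $\tilde{P}[i][k]+Q[k][j]\le Q[i][j]=P[i][i]+Q[i][j]$ for every $k<i$; the added entries never beat the entry at $k=i$ in the max-plus product, so the extension is lossless. After extension, $\textsc{Sub}$ computes each mixed product, and the recursion $T(n,m)=2T(n/2,m)+O(\textsc{Sub}(n,m))$ sums by the master theorem to $O(\textsc{Sub}(n,m))$ since the subroutine work dominates.

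The main obstacle I anticipate is tuning $\textsc{Sub}$ so that its running time takes the form $\tilde O(n^{2}\cdot m^{\alpha})$ with $\alpha<1$, rather than the $\tilde O(n^{3}/m^{0.1756})$ produced by a one-level block decomposition of size $s=m$ (which matches the target $\tilde O(n^{2}m^{0.9038})$ only when $m$ is nearly $n$). To close this gap I expect $\textsc{Sub}$ itself to be designed recursively: after one layer of blocking, each intermediate block-pair product is again bounded-difference with a restricted entry range, so a second layer of blocking can shrink the effective dimensions of subsequent Bringmann calls and amortize the per-call cost. Combined with the outer recursion and careful parameter balancing, the exponent $0.9038$ emerges from the resulting recurrence, with the $0.9250$ deterministic bound inheriting the deterministic exponent $2.8603$ of Theorem~\ref{theorem:bringmann}.
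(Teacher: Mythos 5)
Your high-level skeleton matches the paper's: a divide-and-conquer recursion on the upper-triangular structure, a "lossless" extension that fills the $-\infty$ part of an upper-triangular factor with $W(j-i)$ (your $-2(i-k)$ is exactly this for $W=2$, and your justification that such entries never beat the diagonal term is the same argument the paper uses), and invocation of Theorem~\ref{theorem:bringmann} at the base of the recursion. You also correctly diagnose the central difficulty: a one-level blocked application of Theorem~\ref{theorem:bringmann} gives $\tilde O(n^3/m^{0.1756})$, which only matches $\tilde O(n^2 m^{0.9038})$ when $m$ is nearly $n$, whereas the algorithm needs to work down to $m \approx n^{0.48}$.

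However, your proposed remedy --- designing $\textsc{Sub}$ by ``a second layer of blocking'' to ``shrink the effective dimensions of subsequent Bringmann calls'' --- cannot work, and this is where a genuine piece of the argument is missing. If a block-triple product of size $s_1\times s_1$ is sub-blocked into pieces of size $s_2 < s_1$, the resulting cost is $(s_1/s_2)^3 \cdot s_2^{2.8244} = s_1^3/s_2^{0.1756} \ge s_1^{2.8244}$; nested blocking is never cheaper than calling Theorem~\ref{theorem:bringmann} directly on the outer block. No amount of recursive blocking of Bringmann calls can reach a bound of the form $\tilde O(n^2 m^\alpha)$ for $\alpha<1$ when $m$ is small. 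Your remark that the $m$-boundedness ``shrinks the effective dimensions of the Bringmann calls'' does not translate into an actual reduction: Theorem~\ref{theorem:bringmann} already tolerates arbitrary bounded-difference entries, and knowing the spread is $O(m)$ inside a block does not reduce the block's dimension.

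What the paper does instead, and what your proposal is missing, is a purely \emph{combinatorial} max-plus routine (Lemma~\ref{lemma:mulmonotoneweaker}, realized as Algorithm~\ref{algorithm:mulweaker}) that computes $\max(C', A\star B)$ in $\tilde O(m^2 n)$ time when $A$ is $l\times l$, row- and column-monotone, with entries in $[0,m]$, and $B$ is $l\times n$, row- and column-monotone, using the range-operation/query model of Section~\ref{sec:model}. The key combinatorial fact is that for each output column $j$ there are only $O(m)$ ``useful'' triples $(i,k,j)$, parameterized by the $m{+}1$ largest values on column $j$ of $B$ and the $m{+}1$ values on a column of $A$, so one can cover them all with $\tilde O(m^2)$ range-max updates per column. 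This routine is applied to the cross block $E$ of the recursion (which sits strictly above the diagonal of $A$ and therefore genuinely has entries in $[0,m]$), against the rectangular factor $H$ of size $l/2\times n$. The recursion in the paper keeps $B'$ at shape $l\times n$ rather than splitting it into quadrants; that way the cross block is always the $m$-bounded, fully finite piece of $A$, which is exactly what the combinatorial lemma needs. In your quadrant decomposition the mixed product $A_{11}\star B_{12}$ has the upper-triangular factor on the left and the fully finite factor $B_{12}$ is \emph{not} $m$-bounded, so even if you had the combinatorial lemma in hand you could not apply it symmetrically without further manipulation.

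Combining the combinatorial lemma (contributing $\tilde O(n^2 m^2/\Delta)$ over the $O(n/\Delta)$ internal recursion nodes) with the Bringmann base cases at block size $\Delta$ (contributing $\tilde O(n^2\Delta^{0.8244})$) and balancing at $\Delta \approx m^{1.0963}$ is what yields the exponent $0.9038$. Your current plan substitutes the combinatorial lemma with more Bringmann calls, and for the relevant regime $m \ll n$ this never drops below $\tilde O(n^{2.8244})$, which exceeds the claimed $\tilde O(m^{0.9038}n^2)$ whenever $m < n^{0.912}$.
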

Theorem \ref{theorem:mul} implies that $S(F_1) \star S(F_2)$ can be computed in $\MUL(\min(\abs{F_1}, \abs{F_2}, \abs{T_2}), \abs{T_2})$ time.

For the type II transitions, in Section \ref{sec:transition} we will show the following:
\begin{theorem} \label{theorem:transition}
    Let $F$ be a forest and let $F ^ {\prime}$ be a synchronous subforest of $F$. If $S(F ^ {\prime})$ is known, then $S(F)$ can be computed in $\tilde O(\MUL(\abs{F} - \abs{F ^ {\prime}}, \abs{T_2}) + \abs{T_2}(\abs{F} - \abs{F ^ {\prime}}) ^ 3)$ time, where $\MUL(\abs{F} - \abs{F ^ {\prime}}, \abs{T_2})$ refers to the running time in Theorem \ref{theorem:mul}.
\end{theorem}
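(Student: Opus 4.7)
Write $F' = \sub(u_0, [x_0, y_0])$ and let $\Delta = \abs{F} - \abs{F'}$. The extra nodes $F \setminus F'$ decompose into a \emph{spine} consisting of the ancestors $u_0, u_1, \ldots, u_k$ of $\vroot(F')$ within $F$ (together with the virtual root of $F$ on top, if needed), and \emph{side subforests} $L_i, R_i$ at each spine level, namely $\sub(u_i, [1, x_i - 1])$ and $\sub(u_i, [x_i + 1, \childcount(u_i)])$, with the symmetric adjustment at level~$0$. The total node count across the spine and all side subforests is $\Delta$.

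The plan is to process the spine bottom-up, maintaining a running similarity matrix $S_{\mathrm{cur}}$ initialized as $S(F')$. At each level $i$ I would (a) concatenate with the side subforests via max-plus products $S(L_i) \star S_{\mathrm{cur}} \star S(R_i)$, and (b) wrap with the spine node $u_i$ via Equation~\ref{eq:dp1} to obtain $S(\sub(u_i))$. As a preparation step, I would compute $S(L_i)$ and $S(R_i)$ for every $i$ using the cubic algorithm of Theorem~\ref{theorem:cubic} (or Corollary~\ref{cor:cubic} for prefix-sum-style batches); since the side subforests are disjoint with total size at most $\Delta$, the aggregate cost is $\tilde O(\Delta^2 \abs{T_2})$, well within the target budget.

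The main difficulty is controlling the max-plus-product work. A direct implementation would perform $\Theta(\Delta)$ ``small $\star$ big'' max-plus products (one per side subforest), each costing $\MUL(\Delta, \abs{T_2})$ by Theorem~\ref{theorem:mul}; this sums to $\Theta(\Delta) \cdot \MUL(\Delta, \abs{T_2})$, which exceeds the budget. The resolution, as hinted in the overview, is a three-part combinatorial method: reorganize so that the skeleton's effect on $S(F)$ is assembled combinatorially, and only $O(1)$ genuine max-plus products against the running matrix are needed. Concretely, I would maintain ``skeleton-only'' similarity matrices capturing mappings into $T_2$ that use only skeleton nodes, updated incrementally using the implicit range operation / range query model of Section~\ref{sec:model} together with the row/column monotonicity and bounded-difference guarantees from Lemma~\ref{lemma:bd}. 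Combining these skeleton matrices with $S(F')$ at the very end would then fit within $\MUL(\Delta, \abs{T_2})$ by Theorem~\ref{theorem:mul}.

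The combinatorial maintenance of these skeleton-only matrices, driven by the spine-node wraps and the side-subforest concatenations, is where the $\tilde O(\abs{T_2}\Delta^3)$ term arises. A three-part case split---by whether the witness node in $T_2$ for a given entry lies in the left extension, in the right extension, or is the image of the current spine node being wrapped---combined with the monotone and bounded-difference properties, should yield a $\tilde O(\abs{T_2}\Delta^4)$ implementation; representing the spine as a link/cut tree so that wrap updates along the ancestor chain become amortized-efficient path modifications should then shave a factor of $\Delta$. The main obstacle I anticipate is the design of this three-part combinatorial scheme: precisely specifying the skeleton-only matrices, showing that wrap and concatenation updates preserve the invariants they rely on, and interfacing the link/cut tree path-modification primitive with the implicit matrix data structure of Section~\ref{sec:model} so that the row-monotone and column-monotone range queries continue to answer in $\tilde O(1)$ time.
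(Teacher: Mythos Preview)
Your spine-and-side-subforest decomposition matches the paper, and you correctly diagnose the bottleneck (naive bottom-up costs $\Theta(\Delta)\cdot\MUL(\Delta,\abs{T_2})$) and the target (only $O(1)$ genuine max-plus products against the big matrix). But your proposed three-part split---by whether the witness in $T_2$ lies in the left extension, the right extension, or is the image of the current spine node---is not the split that makes this work, and ``skeleton-only similarity matrices combined with $S(F')$ at the end via one MUL'' cannot succeed as stated: $F'$ is \emph{nested inside} the skeleton, not concatenated beside it, so there is no single max-plus product that merges them in the case where some spine node is mapped.

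The paper splits instead on \emph{which spine nodes $u_1,\dots,u_k$ are mapped}. If none is mapped, then indeed $S(F)=S(l_{1,k+1})\star S(F')\star S(r_{1,k+1})$, two MUL calls---this is the $\MUL(\Delta,\abs{T_2})$ term. Otherwise the paper introduces \emph{restricted similarity matrices} $\hat S(\sub(u_x))$ in which $u_x$ is forced to be mapped, and the three parts are: (bottom) from $S(F')$ to $\hat S(\sub(u_y))$ for the deepest mapped $u_y$; (middle) from $\hat S(\sub(u_y))$ to $\hat S(\sub(u_x))$ for consecutive mapped spine nodes; (top) from $\hat S(\sub(u_x))$ to $S(F)$ for the shallowest mapped $u_x$. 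The crucial phenomenon you are missing is this: once $u_x$ is pinned to some $v\in T_2$, all of $F'$ must map inside $\sub(v)$, so each transition consumes only a \emph{single entry} $s_{i,j}(F')$ (or $\hat s_{i,j}(\sub(u_y))$) rather than a full product; monotonicity and the $O(\Delta)$-bounded entries of $S(l_{i,j}),S(r_{i,j})$ then limit the relevant $(v,i,j)$ to $O(\abs{T_2}\Delta^2)$ per spine node, giving the $\tilde O(\abs{T_2}\Delta^3)$ term. (The link/cut-tree trick you mention is specifically for the middle transitions, where it collapses an inner loop over $y$ by exploiting that $v$ must be an ancestor of $v'$.) Until you isolate this single-entry-lookup structure via the ``which spine nodes are mapped'' casing, the combinatorial scheme will not close.
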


In Section \ref{sec:implementation} we give our implementation of the decomposition scheme, and in Section \ref{sec:runtimeanalysis} we analyze its running time using Theorem \ref{theorem:mul} and Theorem \ref{theorem:transition}.

\subsubsection{Implementation} \label{sec:implementation}
Algorithm \ref{algorithm:decomposition} implements the decomposition scheme. For a synchronous subforest $F$, if $F$ contains more than one trees and both $\abs{L_F}$ and $\abs{R_F}$ are no less than $\Delta$, we compute $S(F)$ using a type I transition from $L_F$ and $F - L_F$. Otherwise, we find a synchronous subforest $F ^ {\prime}$ of $F$ such that $\abs{F} - \abs{F ^ {\prime}}$ is $O(\Delta)$ and use a type II transition from $F ^ {\prime}$ to $F$. To do this, if $\abs{F} \le 3\Delta$ we let $F ^ {\prime} = \emptyset$. Otherwise we let $F ^ {\prime} = F$, keep removing some part from $F ^ {\prime}$ and stop when the next removal will result in $\abs{F} - \abs{F ^ {\prime}}$ being greater than $2\Delta$.

We now show that the total amount of transitions is indeed $O(\abs{T_1} / \Delta)$. Note that each time we have a type I transition, we merge two subforests both of sizes no less than $\Delta$, so the total number of such transitions is $O(\abs{T_1} / \Delta)$. For the type II transitions, we further divide them into two cases:
\begin{itemize}
    \item (The first case) $F ^ {\prime}$ contains less than two trees or one of $\abs{L_{F ^ {\prime}}}$ and $\abs{R_{F ^ {\prime}}}$ is less than $\Delta$, or
    \item (The second case) $F ^ {\prime}$ contains at least two trees and both $\abs{L_{F ^ {\prime}}}$ and $\abs{R_{F ^ {\prime}}}$ are no less than $\Delta$.
\end{itemize}
For the first case, we have $\abs{F} - \abs{F ^ {\prime}} > \Delta$ since otherwise we would not have stopped the removal process. Note that $F \backslash {F ^ {\prime}}$ is disjoint across different type II transitions. Therefore there are no more than $\abs{T_1} / (\Delta + 1) = O(\abs{T_1} / \Delta)$ type II transitions of the first case. For the second case, we can see that the transition we will use to compute $S(F ^ {\prime})$ in the next recursive call will be type I. Therefore, the total number of type II transitions of the second case is bounded by the total number of type I transitions, which as we have already shown is $O(\abs{T_1} / \Delta)$.

\subsubsection{Total running time} \label{sec:runtimeanalysis}
We analyze the total running time of our algorithm for the randomized case only since the analysis for the deterministic case is nearly identical. 

The total running time for the type II transitions is $\tilde O((\abs{T_1} / \Delta)(\MUL(\Delta, \abs{T_2}) + \abs{T_2}\Delta ^ 3))$, which equals $$\tilde O(\abs{T_1}\abs{T_2} ^ 2 / \Delta ^ {0.0952} + \abs{T_1}\abs{T_2}\Delta ^ 2).$$

We relate the running time for the type I transitions to a value that is easier to analyze. Let $X$ be the total running time for the type I transitions, and let $Y$ be what the total running time for the type I transitions would be if $\MUL(\min(\abs{F_1}, \abs{F_2}, \abs{T_2}), \abs{T_2})$ equaled $O(\min(\abs{F_1}, \abs{F_2}, \abs{T_2})\abs{T_2} ^ 2)$ instead of $\tilde O(\min(\abs{F_1}, \abs{F_2}, \abs{T_2}) ^ {0.9038}\abs{T_2} ^ 2)$. Since $\min(\abs{F_1}, \abs{F_2}, \abs{T_2}) \ge \Delta$, $X = \tilde O(Y / \Delta ^ {0.0952})$. 
\begin{algorithm} [H]
    \caption{Computation of $S(F)$ by decomposition} \label{algorithm:decomposition}
    \begin{algorithmic}[1]
        \Procedure{COMPUTE}{$F$}
            \If {$L_F \ne R_F$ \AND $\abs{L_F} \ge \Delta$ \AND $\abs{R_F} \ge \Delta$} \Comment{Type I}
                \State COMPUTE($L_F$)
                \State COMPUTE($F - L_F$)
                \State Compute $S(F) = S(L_F) \star S(F - L_F)$ using Theorem \ref{theorem:mul}
            \Else \Comment{Type II}
                \If {$\abs{F} \le 3\Delta$}
                    \State $F ^ {\prime} \gets \emptyset$
                \Else
                    \State $F ^ {\prime} \gets F$
                    \While {TRUE}
                        \State $F_{\textrm{NEXT}} \gets F ^ {\prime}$
                        \If {$F ^ {\prime}$ contains only one tree}
                            \State $F_{\textrm{NEXT}} \gets F ^ {\prime} - \Root(F ^ {\prime})$
                        \Else  
                            \If {$\abs{L_{F ^ {\prime}}} < \abs{R_{F ^ {\prime}}}$}
                                \State $F_{\textrm{NEXT}} \gets F ^ {\prime} - L_{F ^ {\prime}}$
                            \Else
                                \State $F_{\textrm{NEXT}} \gets F ^ {\prime} - R_{F ^ {\prime}}$
                            \EndIf
                        \EndIf
                        \If {$\abs{F} - \abs{F_{\textrm{NEXT}}} > 2\Delta$}
                            \State \Break
                        \EndIf
                        \State $F ^ {\prime} \gets F_{\textrm{NEXT}}$
                    \EndWhile
                    \State COMPUTE($F ^ {\prime}$)
                \EndIf
                \State Compute $S(F)$ from $S(F ^ {\prime})$ using Theorem \ref{theorem:transition}
            \EndIf
        \EndProcedure
    \end{algorithmic}
\end{algorithm}

We now obtain a bound on $Y$. If both $\abs{F_1}$ and $\abs{F_2}$ are greater than $\abs{T_2}$, the total number of such transitions is $O(\abs{T_1} / \abs{T_2})$, so the total running time is $O((\abs{T_1} / \abs{T_2}) \times \abs{T_2} ^ 3) = O(\abs{T_1}\abs{T_2} ^ 2)$. If $\min(\abs{F_1}, \abs{F_2}) \le \abs{T_2}$, we adapt the classic argument on small to large merging: when we are multiplying the similarity matrices for two subforests, each node in the smaller subforest contributes $O(\abs{T_2} ^ 2)$ to the total running time. Since the size of the smaller forest is no more than $\abs{T_2}$, the sum of the total number of nodes in the smaller forests across all multiplications is $O(\abs{T_1} \log \abs{T_2}) = \tilde O(\abs{T_1})$, and we have $Y = \tilde O(\abs{T_1}\abs{T_2} ^ 2)$. Therefore $X = \tilde O(\abs{T_1}\abs{T_2} ^ 2 / \Delta ^ {0.0952})$. 

The total running time for the two types combined is 
\begin{align*}
    \tilde O(\abs{T_1}\abs{T_2} ^ 2 / \Delta ^ {0.0952} + \abs{T_1}\abs{T_2}\Delta ^ 2)
\end{align*}
where the hidden sub-polynomial factors are not dependent on $\abs{T_1}$. By setting $\Delta \approx \abs{T_2} ^ {0.4773}$ we get the bound in Theorem \ref{theorem:main}.

\subsection{Transition between synchronous subforests} \label{sec:transition}

In this section we show that Theorem \ref{theorem:transition} is true, thereby verifying that the type II transitions can be done within the desired time bound. 

To get the similarity matrix for forest $F$ given the similarity matrix for $F ^ {\prime}$, we need to consider mapping the nodes in $F \backslash F ^ {\prime}$ to $T_2$. As shown in Figure \ref{fig:spine}, we consider the path between $\vroot(F)$ and $\vroot(F ^ {\prime})$ in forest $F$:
\begin{align*}
    \vroot(F) = u_0 \rightarrow u_1 \rightarrow u_2 \rightarrow \cdots \rightarrow u_k = \vroot(F ^ {\prime})
\end{align*}
where all nodes except $u_0$ is in $F$. For $1 \le i \le k$, let $l_i$ be the subforest consisting of subtrees of siblings of $u_i$ to the left of $u_i$ from left to right, and let $r_i$ be the subforest consisting of subtrees of siblings to the right of $u_i$ from left to right. We also let $l_{k + 1}$ be the subforest consisting of subtrees of children of $u_k$ to the left of $F ^ {\prime}$, and $r_{k + 1}$ be the subforest consisting of subtrees of children of $u_k$ to the right of $F ^ {\prime}$. For simplicity, in Figure \ref{fig:spine} subforests are drawn as if they were subtrees.

For $j \ge i$, let $l_{i, j}$ be the subforest $l_i + l_{i + 1} + \cdots + l_j$ and let $r_{i, j}$ be the subforest $r_j + r_{j - 1} + \cdots + r_i$. From Corollary \ref{cor:cubic} it is easy to see that $S(l_{i, j})$ and $S(r_{i, j})$ across all $1 \le i \le j \le k + 1$ can be computed in $O(\abs{T_2}(\abs{F} - \abs{F ^ {\prime}}) ^ 3)$ time.

If none of $u_1, u_2, \cdots, u_k$ is mapped, then the contribution to $S(F)$ will be $S(l_{1, k + 1}) \star S(F ^ {\prime}) \star S(r_{1, k + 1})$, which can be computed in $\MUL(\min(\abs{F}- \abs{F ^ {\prime}}, \abs{T_2}), \abs{T_2})$ time. For the case where at least one of $u_1, u_2, \cdots, u_k$ is mapped, we first define a restricted version of the similarity matrix of a \emph{tree} where the root must be mapped: 
\begin{definition} [Restricted similarity matrix]
For a tree $T$, the \emph{restricted similarity matrix} $\hat S(T) = {\hat s}_{i, j}(T)$ is a $(2\abs{T_2} + 1) \times (2\abs{T_2} + 1)$ matrix where
\begin{align*}
    {\hat s}_{i, j}(T) = 
    \begin{cases}
        \operatorname*{max}\limits_{v \in T_2[i, j)}{\{\similarity(T - \Root(T), \sub(v) - v) + \eta(\Root(T), v)\}}         & \text{if } i \le j \textrm{ and }T_2[i, j) \ne \emptyset \\
        -\infty                    & \text{if } i > j\textrm{ or }T_2[i, j) = \emptyset
    \end{cases}.
\end{align*}
\end{definition}

The transition from $S(F ^ {\prime})$ to $S(F)$ consists of three parts:
\begin{itemize}
    \item (Bottom) For the largest $y$ such that $u_y$ is mapped, the transition from $S(F ^ {\prime})$ to ${\hat S}(\sub(u_y))$,
    \item (Middle) For all $(x, y)$ pairs where $x < y$, the transition from ${\hat S}(\sub(u_y))$ to ${\hat S}(\sub(u_x))$, and
    \item (Top) For the smallest $x$ such that $u_x$ is mapped, the transition from ${\hat S}(\sub(u_x))$ to $S(F)$.
\end{itemize}

\subsubsection{Final transitions from restricted similarity matrices (top)}
We first show how to do the top transitions since they do not involve computation of the restricted similarity matrices. We show:
\begin{lemma} \label{lemma:transition1}
    If $\hat S(\sub(u_x))$ is known for all $1 \le x \le k$, $S(F)$ can be computed in $\tilde O(\MUL(\abs{F}- \abs{F ^ {\prime}}, \abs{T_2}, \abs{T_2}) + \abs{T_2}(\abs{F} - \abs{F ^ {\prime}}) ^ 3)$ time.
\end{lemma}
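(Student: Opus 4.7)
The plan is to split the contribution to $S(F)$ into a \emph{no-mapping} case (none of $u_1, \ldots, u_k$ mapped) and a \emph{mapped} case (at least one mapped), and return their entry-wise max. The no-mapping case is the product $S(l_{1,k+1}) \star S(F') \star S(r_{1,k+1})$ already described just before the statement; writing $d = |F|-|F'|$, both $|l_{1,k+1}|$ and $|r_{1,k+1}|$ are $O(d)$, so the outer factors are $O(d)$-bounded-upper-triangular, and the paper's preceding remark together with Theorem \ref{theorem:mul} (and its right-operand analog, obtained from Theorem \ref{theorem:mul} by reflecting across the anti-diagonal) handles this triple product in $\tilde O(\MUL(d, |T_2|))$ time.

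For the mapped case, conditioning on the smallest index $x \in [1,k]$ with $u_x$ mapped gives contribution $S(l_{1,x}) \star \hat S(\sub(u_x)) \star S(r_{1,x})$, taken as a max over $x$. The key reduction uses the row-monotonicity of $S(l_{1,x})$ and column-monotonicity of $S(r_{1,x})$ to restrict the outer $(p,q)$ in the triple product's expansion to pairs of the form $(\Left(v),\Right(v))$ for $v \in T_2$: if $v^* \in T_2[p,q)$ realizes $\hat S(\sub(u_x))_{p,q}$, then shrinking $(p,q)$ to $(\Left(v^*), \Right(v^*))$ only increases $S(l_{1,x})_{i,p}$ and $S(r_{1,x})_{q,j}$ while still admitting $v^*$ as an inner maximizer. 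Consequently each $(x,v) \in [1,k] \times V(T_2)$ induces a single max-plus rank-one contribution
\[
S(F)_{i,j} \gets \max\bigl(S(F)_{i,j},\; \alpha^{(x,v)}_i + \beta^{(x,v)}_j + c_{x,v}\bigr),
\]
with $\alpha^{(x,v)}_i = S(l_{1,x})_{i,\Left(v)} \in [0, 2d]$, $\beta^{(x,v)}_j = S(r_{1,x})_{\Right(v),j} \in [0, 2d]$, and $c_{x,v} = \hat S(\sub(u_x))_{\Left(v),\Right(v)}$ answerable in $\tilde O(1)$ from the stored $\hat S$ matrix.

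I implement each rank-one update inside the monotone-matrix model of Section \ref{sec:model} by a level decomposition into $O(d^2)$ corner updates. Since $\alpha^{(x,v)}$ is non-increasing in $i$ and $\beta^{(x,v)}$ non-decreasing in $j$, both taking values in $\{0,1,\ldots,2d\}$, for each level pair $(a,b) \in \{0,\ldots,2d\}^2$ the super-level set $\{\alpha^{(x,v)}_i \ge a\} \times \{\beta^{(x,v)}_j \ge b\}$ is an upper-right corner $\{i \le i^*\} \times \{j \ge j^*\}$, where $i^* = \maxrow(S(l_{1,x}), \Left(v), a)$ and $j^* = \mincol(S(r_{1,x}), \Right(v), b)$ cost $\tilde O(1)$ each; a single $\rangemax(S(F), i^*, j^*, a+b+c_{x,v})$ then realizes that level in $\tilde O(1)$. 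Summing over $(a,b)$ reproduces the rank-one update exactly (because the max over admissible levels is attained at $(a,b) = (\alpha^{(x,v)}_i, \beta^{(x,v)}_j)$), and after initializing $S(F)$ with the no-mapping matrix the iteration over $(x,v)$ combines both cases correctly.

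For the running time, precomputing $S(l_{1,x})$ and $S(r_{1,x})$ for all $x \in [1,k+1]$ costs $\tilde O(|T_2| d^2)$ by Corollary \ref{cor:cubic}; the mapped case then performs $O(d \cdot |T_2| \cdot d^2) = O(|T_2| d^3)$ $\rangemax$ operations in $\tilde O(|T_2| d^3)$ total, which together with the $\tilde O(\MUL(d,|T_2|))$ no-mapping cost yields the claimed bound. The main obstacle is invoking Theorem \ref{theorem:mul} for the \emph{second} multiplication in the no-mapping triple product, where the $O(d)$-bounded operand $S(r_{1,k+1})$ sits on the right rather than the left; the required symmetric form follows from a reflection argument on Theorem \ref{theorem:mul}'s proof, but establishing it rigorously is the most delicate step of the plan.
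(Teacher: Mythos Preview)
Your proposal is correct and follows essentially the same approach as the paper: the no-mapping case is handled identically, and your ``rank-one update via level decomposition into $O(d^2)$ corner updates'' is precisely the paper's Algorithm~\ref{algorithm:transition1}, which loops over $x\in[1,k]$, $v\in T_2$, and two level parameters $y,z\in[0,2(\abs{F}-\abs{F'})]$, retrieving $i,j$ via $\maxrow$/$\mincol$ and applying a single $\rangemax$. Your concern about the right-operand variant of Theorem~\ref{theorem:mul} is well-noted but, as you suspected, is immediate by anti-diagonal reflection and the paper simply takes it for granted.
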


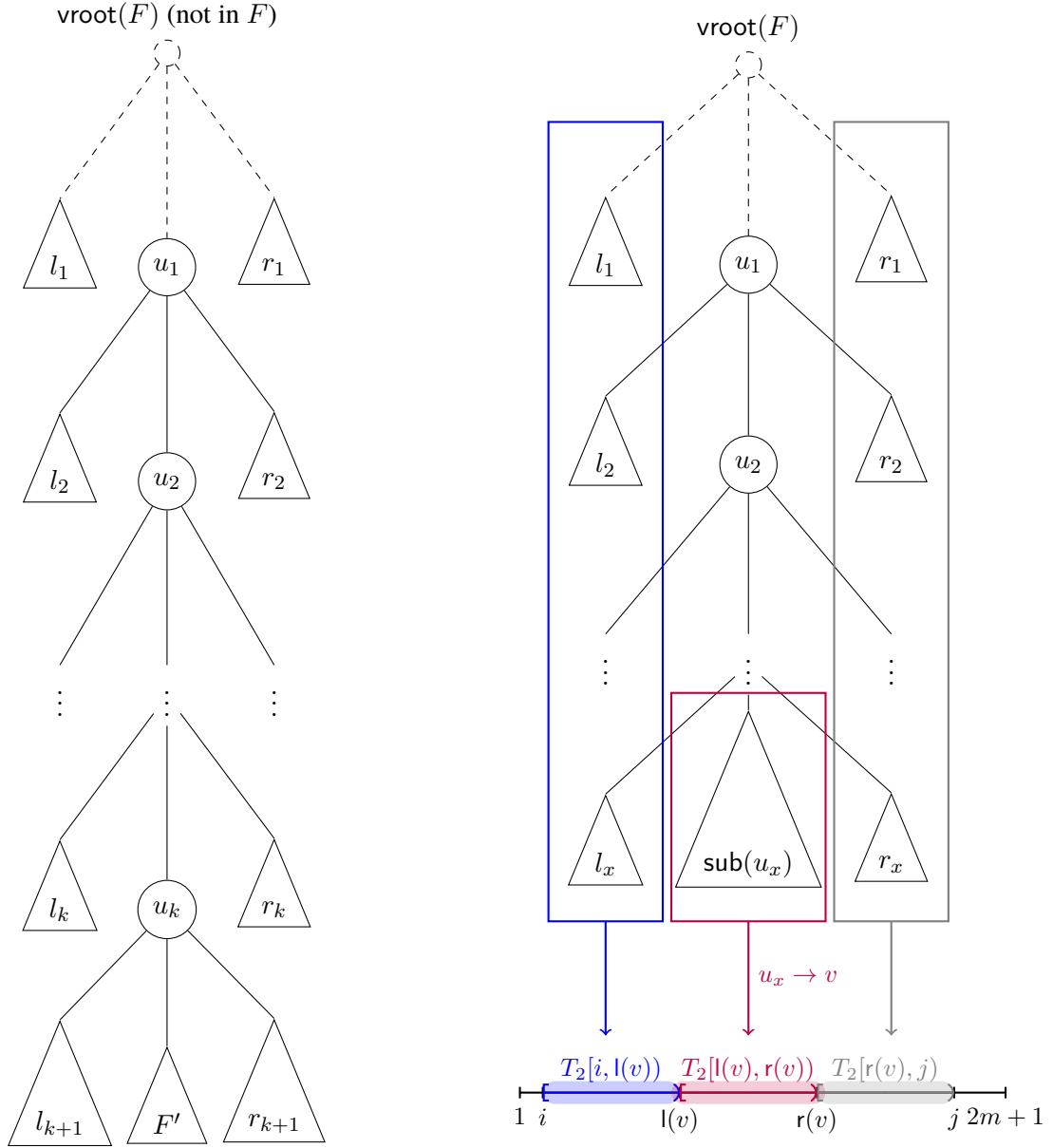
\begin{figure}[H]  
\centering 
    \begin{subfigure}[b]{0.5\linewidth}
        \begin{tikzpicture}[
every node/.style={solid},
edge from parent/.style={draw,solid}, 
level/.style={sibling distance=15mm},
level distance=30mm
]
\node [circle,dashed,draw,label=above:{$\vroot(F)$ (not in $F$)}] (vroot) {}
    [child anchor=north]
    child {node [triangle,draw] (l1) {$l_1$} edge from parent[dashed]
    }
    child {node [circle,draw] (u1) {$u_1$} edge from parent[dashed]
        child {node [triangle,draw] (l2) {$l_2$} 
        }
        child {node [circle,draw] (u2) {$u_2$}  edge from parent[solid]
            child {node {$\vdots$}  edge from parent[solid]
            }
            child {node {$\vdots$}  edge from parent[solid]
                child {node [triangle,draw] (lk) {$l_k$} 
                }
                child {node [circle,draw] (uk) {$u_k$}  edge from parent[solid]
                    child {node [triangle,draw] (lk1) {$l_{k + 1}$} 
                    }
                    child {node [triangle,draw] (fp) {$F ^ {\prime}$}
                    }
                    child {node [triangle,draw] (rk1) {$r_{k + 1}$}
                    };
                }
                child {node [triangle,draw] (rk) {$r_k$}
                };
            }
            child {node {$\vdots$}  edge from parent[solid]
            };
        }
        child {node [triangle,draw] (r2) {$r_2$}
        };
    }
    child {node [triangle,draw] (r1) {$r_1$} edge from parent[dashed]
    };
    \node[draw=none] at (-3.5,0) {};
\end{tikzpicture}
        \caption{The nodes $u_{1 \cdots k}$ and the subforests $l_{1 \cdots k + 1}$ and $r_{1 \cdots k + 1}$}
        \label{fig:spine}  
    \end{subfigure}
    \begin{subfigure}[b]{0.47\linewidth}
        \begin{tikzpicture}[
scale=0.8,
every node/.style={solid},
edge from parent/.style={draw,solid}, 
level/.style={sibling distance=25mm},
level distance=35mm
]
\node [circle,dashed,draw,label=above:{$\vroot(F)$}] (vroot) {}
    [child anchor=north]
    child {node [triangle,draw] (l1) {$l_1$} edge from parent[dashed]
    }
    child {node [circle,draw] (u1) {$u_1$} edge from parent[dashed]
        child {node [triangle,draw] (l2) {$l_2$} 
        }
        child {node [circle,draw] (u2) {$u_2$}  edge from parent[solid]
            child {node {$\vdots$}  edge from parent[solid]
            }
            child {node {$\vdots$}  edge from parent[solid]
                child {node [triangle,draw] (lx) {$l_x$} 
                }
                child {node [triangle,draw] (ux) {$\sub(u_x)$}  edge from parent[solid]
                }
                child {node [triangle,draw] (rx) {$r_x$}
                };
            }
            child {node {$\vdots$}  edge from parent[solid]
            };
        }
        child {node [triangle,draw] (r2) {$r_2$}
        };
    }
    child {node [triangle,draw] (r1) {$r_1$} edge from parent[dashed]
    };
\draw[thick] (-4,-18) -- (4.5,-18);
\foreach \x/\xtext in {-4/$1$,-3.6/$i$,-1.2/$\Left(v)$,1.2/$\Right(v)$,3.6/$j$, 4.5/$2m + 1$}
    \draw[thick] (\x,-17.9) -- (\x,-18.1) node[below] {\small \xtext};
    
\draw[[-, thick, blue] (-3.6,-18) -- (-2.4,-18)  node[above] {\small $T_2[i, \Left(v))$};
\draw[-), thick, blue] (-2.4,-18) -- (-1.2,-18);
\fill[opacity = 0.2, blue,rounded corners=1ex] (-3.6,-17.8) -- (-1.2, -17.8) -- (-1.2, -18.2) -- (-3.6, -18.2) -- cycle;

\draw[thick, blue] (-3.5, -1) -- (-3.5, -15) -- (-1.5, -15) -- (-1.5, -1) -- cycle;
\draw[->, thick, blue] (-2.5, -15) -- (-2.5, -17);

\draw[[-, thick, purple] (-1.2,-18) -- (0,-18)  node[above] {\small $T_2[\Left(v), \Right(v))$};
\draw[-), thick, purple] (0,-18) -- (1.2,-18);
\fill[opacity = 0.2, purple,rounded corners=1ex] (-1.35,-17.8) -- (1.35, -17.8) -- (1.35, -18.2) -- (-1.35, -18.2) -- cycle;

\draw[thick, purple] (-1.35, -11) -- (-1.35, -15) -- (1.35, -15) -- (1.35, -11) -- cycle;
\draw[->, thick, purple] (0, -15) -- (0, -17) node[midway, right] {\small $u_x \rightarrow v$};

\draw[[-, thick, gray] (1.2,-18) -- (2.4,-18)  node[above] {\small $T_2[\Right(v), j)$};
\draw[-), thick, gray] (2.4,-18) -- (3.6,-18);
\fill[opacity = 0.2, gray,rounded corners=1ex] (1.2,-17.8) -- (3.6, -17.8) -- (3.6, -18.2) -- (1.2, -18.2) -- cycle;

\draw[thick, gray] (1.5, -1) -- (1.5, -15) -- (3.5, -15) -- (3.5, -1) -- cycle;
\draw[->, thick, gray] (2.5, -15) -- (2.5, -17);

\end{tikzpicture}
        \caption{Transition from $\hat S(\sub(u_x))$ to $S(F)$ (top)}     \label{fig:top}  
    \end{subfigure}
\caption{Transition between synchronous subforests}
\end{figure}  
 
To compute $S(F)_{i, j} = \similarity(F, T_2[i, j))$, we can enumerate the smallest $x$ such that $u_x$ maps to some node $v \in T_2[i, j)$. Then as shown in Figure \ref{fig:top}, we have
\begin{itemize}
    \item Nodes in $l_{1, x}$ map to nodes in $T_2[i, \Left(v))$, contributing $s_{i, \Left(v)}(l_{1, x})$,
    \item Nodes in $\sub(u_x)$ map to nodes in $T_2[\Left(v), \Right(v))$, contributing ${\hat s}_{\Left(v), \Right(v)}(\sub(u_x))$, and
    \item Nodes in $r_{1, x}$ map to nodes in $T_2[\Right(v), j)$, contributing $s_{\Right(v), j}(r_{1, x})$.
\end{itemize}
The total contribution will be $s_{i, \Left(v)}(l_{1, x}) + {\hat s}_{\Left(v), \Right(v)}(\sub(u_x)) + s_{\Right(v), j}(r_{1, x})$. Note that ${\hat s}_{\Left(v), \Right(v)}(\sub(u_x))$ might actually correspond to a mapping where $u_x$ is mapped to some node other than $v$ in $T_2[\Left(v), \Right(v))$, but this still gives us a valid mapping overall and does not affect the correctness of our algorithm. There are still $O(\abs{T_2} ^ 3(\abs{F} - \abs{F ^ {\prime}}))$ quadruples $(i, j, x, v)$, but as shown in Algorithm \ref{algorithm:transition1}, we can cut the number of quadruples to $O(\abs{T_2}(\abs{F} - \abs{F ^ {\prime}}) ^ 3)$ using an idea similar to the one in Algorithm \ref{algorithm:mulbasic}.
\begin{algorithm}  [H]
    \caption{Computation of $S(F)$ from $\hat S(\sub(u_x))$} \label{algorithm:transition1}
    \begin{algorithmic}[1]
        \State $S(F) \gets S(l_{1, k + 1}) \star S(F ^ {\prime}) \star S(r_{1, k + 1})$ \Comment{No $u_x$ mapped}
        \For {$x \in [1, k]$} 
            \For {$v \in T_2$}
                \For {$y \in [0, 2(\abs{F} - \abs{F ^ {\prime}})]$}
                    \For {$z \in [0, 2(\abs{F} - \abs{F ^ {\prime}})]$}
                        \State $i \gets \maxrow(S(l_{1, x}), \Left(v), y)$
                        \State $j \gets \mincol(S(r_{1, x}), \Right(v), z)$
                        \State $S(F) \gets \rangemax(S(F), i, j, s_{i, \Left(v)}(l_{1, x}) + {\hat s}_{\Left(v), \Right(v)}(\sub(u_x)) + s_{\Right(v), j}(r_{1, x}))$
                    \EndFor
                \EndFor
            \EndFor
        \EndFor
    \end{algorithmic}
\end{algorithm}

% Since we are computing tree edit distance indirectly by computing similarity, the number of parts immediately decreases from six to four. We also incorporate the cost of mapping $u$ to $v$ into the cost of the entirety of $\sub(u)$ to some nodes in $T_2$ with $u$ mapped, which further decreases the number of parts to three.

\subsubsection{Computation of restricted similarity matrices (bottom/middle)}

We now finish the proof of Theorem \ref{theorem:transition} by showing
\begin{lemma} \label{lemma:transition2} 
    For any $1 \le x \le k$, if $\hat S(\sub(u_y))$ is known for all $y > x$, $\hat S(\sub(u_x))$ can be computed in $\tilde O(\abs{T_2}(\abs{F} - \abs{F ^ {\prime}}) ^ 2)$ time.
\end{lemma}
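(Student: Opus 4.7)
The plan is to compute, for each $v\in T_2$, the value
\[
h(v)=\eta(u_x,v)+a(v),\qquad a(v):=s_{\Left(v)+1,\Right(v)-1}(\sub(u_x)-u_x),
\]
and then store $\hat S(\sub(u_x))$ implicitly as the row/column-monotone matrix whose $(i,j)$-entry is $\max_{v\in T_2[i,j)}h(v)$, which fits the model of Section~\ref{sec:model} after $\tilde O(\abs{T_2})$ preprocessing.  I decompose $a(v)$ by the smallest index $y^*>x$ with $u_{y^*}$ mapped: the absence of any such $y^*$ is case (a), and the existence of $y^*$ with image $v^*\in\sub(v)-v$ is case (b).  Pushing the outer splits to $k_1=\Left(v^*)$, $k_2=\Right(v^*)$ in $S(l_{x+1,y^*})\star\hat S(\sub(u_{y^*}))\star S(r_{x+1,y^*})$ via row/column monotonicity of the outer factors gives case (b) the bound
\[
a(v)\ge A(y^*)_{\Left(v)+1,\Left(v^*)}+h_{y^*}(v^*)+C(y^*)_{\Right(v^*),\Right(v)-1},
\]
where $A(y^*):=S(l_{x+1,y^*})$, $C(y^*):=S(r_{x+1,y^*})$, and $h_{y^*}(v^*)$ is the single-$v^*$ restricted-similarity value stored alongside $\hat S(\sub(u_{y^*}))$ during its earlier recursive construction.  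Case (a) gives $a(v)\ge\bigl(S(l_{x+1,k+1})\star S(F')\star S(r_{x+1,k+1})\bigr)_{\Left(v)+1,\Right(v)-1}$.

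First, I precompute $A(y^*),C(y^*)$ for all $y^*\in(x,k]$ via Corollary~\ref{cor:cubic} in $\tilde O((\abs{F}-\abs{F'})^2\abs{T_2})$ time.  For case (a), I run an Algorithm~\ref{algorithm:mulbasic}-style enumeration of triples $(v,y,z)$ with $y\in[0,2\abs{l_{x+1,k+1}}]$, $z\in[0,2\abs{r_{x+1,k+1}}]$, compute $k_1=\mincol(S(l_{x+1,k+1}),\Left(v)+1,y)$ and $k_2=\maxrow(S(r_{x+1,k+1}),\Right(v)-1,z)$, and update $a(v)\leftarrow\max\!\bigl(a(v),\,y+s_{k_1,k_2}(F')+z\bigr)$.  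The bounded-upper-triangular structure of the two outer matrices guarantees every useful $(k_1,k_2)$ is covered, at total cost $\tilde O(\abs{T_2}(\abs{F}-\abs{F'})^2)$.

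For case (b), I maintain the $a(v)$'s as node weights on a link/cut tree representing $T_2$ and iterate over $(y^*,v^*)$.  For fixed $(y^*,v^*)$, the case-(b) contribution is valid exactly for proper ancestors of $v^*$; as $v$ walks up that ancestor path, $\Left(v)+1$ strictly decreases while $\Right(v)-1$ strictly increases, so by row/column monotonicity both $A(y^*)_{\Left(v)+1,\Left(v^*)}$ and $C(y^*)_{\Right(v^*),\Right(v)-1}$ are non-decreasing.  Since $A(y^*)$ is $2\abs{l_{x+1,y^*}}$-bounded-upper-triangular and $C(y^*)$ is $2\abs{r_{x+1,y^*}}$-bounded-upper-triangular, the sum takes only $O(\abs{l_{x+1,y^*}}+\abs{r_{x+1,y^*}})$ distinct values along the path, so the path splits into that many consecutive constant-contribution segments.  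The segment boundaries are located by enumerating levels through $\maxrow$ on $A(y^*),C(y^*)$ combined with $\tilde O(1)$ ancestor-by-$\Left$-coordinate queries on $T_2$ (precomputed by binary lifting).  For each segment I issue a single path-max update $a(v)\leftarrow\max(a(v),c)$ on the link/cut tree at amortized $\tilde O(1)$ cost.

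Using the telescoping bound
\[
\sum_{y^*=x+1}^{k}\bigl(\abs{l_{x+1,y^*}}+\abs{r_{x+1,y^*}}\bigr)=\sum_{y=x+1}^{k}(\abs{l_y}+\abs{r_y})(k-y+1)\le k\cdot(\abs{F}-\abs{F'})=O\bigl((\abs{F}-\abs{F'})^2\bigr),
\]
summed over the $\abs{T_2}$ choices of $v^*$, the total number of path updates is $\tilde O(\abs{T_2}(\abs{F}-\abs{F'})^2)$.  Finally, querying $a(v)$ at every $v\in T_2$ and assembling $\hat S(\sub(u_x))$ as the implicit 2D range-max over the points $\{(h(v),\Left(v),\Right(v))\}_{v\in T_2}$ finishes the procedure within the target time.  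The main obstacle is avoiding a naive per-$y^*$ max-plus product, which Lemma~\ref{lemma:mulbasic} would charge at $\tilde O(\abs{T_2}^2)$ per $y^*$; the combined monotone/bounded structure of $A(y^*),C(y^*)$ is exactly what lets each $(y^*,v^*)$-contribution be aggregated into $O(\abs{l_{x+1,y^*}}+\abs{r_{x+1,y^*}})$ constant-segment updates, and the link/cut tree is what makes each such update amortized $\tilde O(1)$.
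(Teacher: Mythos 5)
Your proof is correct and follows essentially the same route as the paper: your case (a) matches the paper's ``bottom'' enumeration, and your case (b) matches the paper's ``middle'' speed-up --- pushing the split points to $\Left(v^*)/\Right(v^*)$, observing that the outer contributions $A(y^*)_{\Left(v)+1,\Left(v^*)}+C(y^*)_{\Right(v^*),\Right(v)-1}$ form a non-decreasing staircase with $O(\abs{F}-\abs{F^{\prime}})$ levels along the ancestor path of $v^*$, and aggregating those levels via path-max updates on a link/cut tree. Your only deviations are cosmetic: you carry the per-node value $h_{y^*}(v^*)$ alongside the implicit matrix instead of querying $\hat s_{\Left(v^*),\Right(v^*)}(\sub(u_{y^*}))$ directly (the latter also works, since any such query value corresponds to a valid mapping of $\sub(u_{y^*})$ into $\sub(v^*)$), and you consolidate what the paper keeps as separate auxiliary arrays $\phi(u_y,\cdot)$ into a single link/cut tree holding $a(v)$.
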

This enables us to compute $\hat S(\sub(u_x))$ for all $1 \le x \le k$ in decreasing order of $x$ with total time $O(\abs{T_2}(\abs{F} - \abs{F ^ {\prime}}) ^ 3)$.

To compute $\hat S(\sub(u_x))$ we need to consider two cases:
\begin{itemize}
    \item The first case (bottom): no $u_y (y > x)$ is mapped.
    \item The second case (middle): there exists some $y > x$ such that $u_y$ is mapped.
\end{itemize}

\paragraph{The first case (bottom).} This is the simpler case. We can enumerate the triple $(i, j, v)$ such that, as shown in Figure \ref{fig:bottom},
\begin{itemize}
    \item $u_x$ maps to $v$, contributing $\eta(u_x, v)$,
    \item Nodes in $l_{x + 1, k + 1}$ map to nodes in $T_2[\Left(v) + 1, i)$, contributing $s_{\Left(v) + 1, i}(l_{x + 1, k + 1})$,
    \item Nodes in $F ^ {\prime}$ map to nodes in $T_2[i, j)$, contributing $s_{i, j}(F ^ {\prime}) $, and
    \item Nodes in $r_{x + 1, k + 1}$ map to nodes in $T_2[j, \Right(v) - 1)$, contributing $s_{j, \Right(v) - 1}(r_{x + 1, k + 1})$.
\end{itemize} 
The total contribution is $\eta(u_x, v) + s_{\Left(v) + 1, i}(l_{x + 1, k + 1}) + s_{i, j}(F ^ {\prime}) + s_{j, \Right(v) - 1}(r_{x + 1, k + 1})$, and the contribution applies to ${\hat s}_{i ^ {\prime}, j ^ {\prime}}(\sub(u_x))$ for all $i ^ {\prime} \le \Left(v) < \Right(v) \le j ^ {\prime}$.

\paragraph{The second case (middle).} Without loss of generality we suppose no $x < y ^ {\prime} < y$ exists such that $u_{y ^ {\prime}}$ is mapped. We can enumerate the quadruple $(i, j, y, v)$ such that, as shown in Figure \ref{fig:mid},  
\begin{itemize}
    \item $u_x$ maps to $v$, contributing $\eta(u_x, v)$,
    \item Nodes in $l_{x + 1, y}$ map to nodes in $T_2[\Left(v) + 1, i)$, contributing $s_{\Left(v) + 1, i}(l_{x + 1, y})$,
    \item Nodes in $\sub(u_y)$ map to nodes in $T_2[i, j)$ and in particular $u_y$ is mapped to some $v ^ {\prime} \in T_2[i, j)$, contributing ${\hat s}_{i, j}(\sub(u_y))$, and
    \item Nodes in $r_{x + 1, y}$ map to nodes in $T_2[j, \Right(v) - 1)$, contributing $s_{j, \Right(v) - 1}(r_{x + 1, y})$.
\end{itemize}
The total contribution is $\eta(u_x, v) + s_{\Left(v) + 1, i}(l_{x + 1, y}) + {\hat s}_{i, j}(\sub(u_y)) + s_{j, \Right(v) - 1}(r_{x + 1, y})$, and the contribution applies to ${\hat s}_{i ^ {\prime}, j ^ {\prime}}(\sub(u_x))$ for all $i ^ {\prime} \le \Left(v) < \Right(v) \le j ^ {\prime}$.

\begin{figure} [H]  
\centering 
    \begin{subfigure}[b]{0.48\linewidth}
        \begin{tikzpicture}[
scale=0.7,
every node/.style={solid},
edge from parent/.style={draw,solid}, 
level/.style={sibling distance=25mm},
level distance=28mm
]
\node [circle,draw,label=above:{$u_x$ (mapped to $v$)}] {}
    [child anchor=north]
    child {node [triangle,draw] {\small $l_{x + 1}$} edge from parent[solid]
    }
    child {node [circle,draw] {$u_{x + 1}$} edge from parent[solid]
        child {node [triangle,draw] {\small $l_{x + 2}$} 
        }
        child {node [circle,draw] {$u_{x + 2}$}  edge from parent[solid]
            child {node {$\vdots$}  edge from parent[solid]
            }
            child {node {$\vdots$}  edge from parent[solid]
                child {node [triangle,draw] {$l_k$} 
                }
                child {node [circle,draw] {$u_k$}  edge from parent[solid]
                    child {node [triangle,draw] {\small $l_{k + 1}$} 
                    }
                    child {node [triangle,draw] {$F ^ {\prime}$}
                    }
                    child {node [triangle,draw] {\small $r_{k + 1}$}
                    };
                }
                child {node [triangle,draw] {$r_k$}
                };
            }
            child {node {$\vdots$}  edge from parent[solid]
            };
        }
        child {node [triangle,draw] {\small $r_{x + 2}$}
        };
    }
    child {node [triangle,draw] {\small $r_{x + 1}$} edge from parent[solid]
    };
\draw[thick] (-5,-18) -- (5,-18);
\foreach \x/\xtext in {-5/$1$,-3.6/$\Left(v) + 1$,-1.2/$i$,1.2/$j$,3.6/$\Right(v)$, 5/$2m + 1$}
    \draw[thick] (\x,-17.9) -- (\x,-18.1) node[below] {\small \xtext};
    
\draw[[-, thick, blue] (-3.6,-18) -- (-2.4,-18)  node[above] {\small $T_2[\Left(v) + 1, i)$};
\draw[-), thick, blue] (-2.4,-18) -- (-1.2,-18);
\fill[opacity = 0.2, blue,rounded corners=1ex] (-3.6,-17.8) -- (-1.2, -17.8) -- (-1.2, -18.2) -- (-3.6, -18.2) -- cycle;

\draw[thick, blue] (-3.5, -0.5) -- (-3.5, -15) -- (-1.5, -15) -- (-1.5, -0.5) -- cycle; 
\draw[->, thick, blue] (-2.5, -15) -- (-2.5, -17);

\draw[[-, thick, purple] (-1.2,-18) -- (0,-18)  node[above] {\small $T_2[i, j)$};
\draw[-), thick, purple] (0,-18) -- (1.2,-18);
\fill[opacity = 0.2, purple,rounded corners=1ex] (-1.35,-17.8) -- (1.35, -17.8) -- (1.35, -18.2) -- (-1.35, -18.2) -- cycle;

\draw[thick, purple] (-1.2, -12) -- (-1.2, -15) -- (1.2, -15) -- (1.2, -12) -- cycle;
\draw[->, thick, purple] (0, -15) -- (0, -17);

\draw[[-, thick, gray] (1.2,-18) -- (2.4,-18)  node[above] {\small $T_2[j, \Right(v) - 1)$};
\draw[-), thick, gray] (2.4,-18) -- (3.6,-18);
\fill[opacity = 0.2, gray,rounded corners=1ex] (1.2,-17.8) -- (3.6, -17.8) -- (3.6, -18.2) -- (1.2, -18.2) -- cycle;

\draw[thick, gray] (1.5, -0.5) -- (1.5, -15) -- (3.5, -15) -- (3.5, -0.5) -- cycle;
\draw[->, thick, gray] (2.5, -15) -- (2.5, -17);

\end{tikzpicture}
        \caption{The first case (bottom)} \label{fig:bottom}  
    \end{subfigure}
    \begin{subfigure}[b]{0.48\linewidth}
        \begin{tikzpicture}[
scale=0.7,
every node/.style={solid},
edge from parent/.style={draw,solid}, 
level/.style={sibling distance=25mm},
level distance=35mm
]
\node [circle,draw,label=above:{$u_x$ (mapped to $v$)}] {}
    [child anchor=north]
    child {node [triangle,draw] {\small $l_{x + 1}$} edge from parent[solid]
    }
    child {node [circle,draw] {$u_{x + 1}$} edge from parent[solid]
        child {node [triangle,draw] {\small $l_{x + 2}$} 
        }
        child {node [circle,draw] {$u_{x + 2}$}  edge from parent[solid]
            child {node {$\vdots$}  edge from parent[solid]
            }
            child {node {$\vdots$}  edge from parent[solid]
                child {node [triangle,draw] {$l_y$} 
                }
                child {node [triangle,draw] {\scriptsize $\sub(u_y)$}  edge from parent[solid]
                }
                child {node [triangle,draw] {$r_y$}
                };
            }
            child {node {$\vdots$}  edge from parent[solid]
            };
        }
        child {node [triangle,draw] {\small $r_{x + 2}$}
        };
    }
    child {node [triangle,draw] {\small $r_{x + 1}$} edge from parent[solid]
    };
\draw[thick] (-5,-18) -- (5,-18);
\foreach \x/\xtext in {-5/$1$,-3.6/$\Left(v) + 1$,-1.2/$i$,1.2/$j$,3.6/$\Right(v)$, 5/$2m + 1$}
    \draw[thick] (\x,-17.9) -- (\x,-18.1) node[below] {\small \xtext};
    
\draw[[-, thick, blue] (-3.6,-18) -- (-2.4,-18)  node[above] {\small $T_2[\Left(v) + 1, i)$};
\draw[-), thick, blue] (-2.4,-18) -- (-1.2,-18);
\fill[opacity = 0.2, blue,rounded corners=1ex] (-3.6,-17.8) -- (-1.2, -17.8) -- (-1.2, -18.2) -- (-3.6, -18.2) -- cycle;

\draw[thick, blue] (-3.5, -1) -- (-3.5, -15) -- (-1.45, -15) -- (-1.45, -1) -- cycle; 
\draw[->, thick, blue] (-2.5, -15) -- (-2.5, -17);

\draw[[-, thick, purple] (-1.2,-18) -- (0,-18)  node[above] {\small $T_2[i, j)$};
\draw[-), thick, purple] (0,-18) -- (1.2,-18);
\fill[opacity = 0.2, purple,rounded corners=1ex] (-1.35,-17.8) -- (1.35, -17.8) -- (1.35, -18.2) -- (-1.35, -18.2) -- cycle;

\draw[thick, purple] (-1.35, -11.2) -- (-1.35, -15) -- (1.35, -15) -- (1.35, -11.2) -- cycle;
\draw[->, thick, purple] (0, -15) -- (0, -17) node[midway, right, align=center] { $u_y \rightarrow v ^ {\prime} \in T_2[i, j)$};

\draw[[-, thick, gray] (1.2,-18) -- (2.4,-18)  node[above] {\small $T_2[j, \Right(v) - 1)$};
\draw[-), thick, gray] (2.4,-18) -- (3.6,-18);
\fill[opacity = 0.2, gray,rounded corners=1ex] (1.2,-17.8) -- (3.6, -17.8) -- (3.6, -18.2) -- (1.2, -18.2) -- cycle;

\draw[thick, gray] (1.45, -1) -- (1.45, -15) -- (3.5, -15) -- (3.5, -1) -- cycle;
\draw[->, thick, gray] (2.5, -15) -- (2.5, -17);

\end{tikzpicture}
        \caption{The second case (middle)} \label{fig:mid}  
    \end{subfigure}
\caption{Computing $\hat S(\sub(u_x))$}
\end{figure}

Using an idea similar to the one in Algorithm \ref{algorithm:mulbasic}, the number of triples in the first case can be reduced to $O(\abs{T_2}(\abs{F} - \abs{F ^ {\prime}}) ^ 2)$, while the number of quadruples in the second case can be reduced to $O(\abs{T_2}(\abs{F} - \abs{F ^ {\prime}}) ^ 3)$, which is just an extra $O(\abs{F} - \abs{F ^ {\prime}}))$ factor more than the bound in Lemma \ref{lemma:transition2}. The details are shown in Algorithm \ref{algorithm:transition2}. In fact, this extra $O(\abs{F} - \abs{F ^ {\prime}}))$ overhead in quadruple count will only change the $\abs{T_2}(\abs{F} - \abs{F ^ {\prime}}) ^ 3$ term in Theorem \ref{theorem:transition} to $\abs{T_2}(\abs{F} - \abs{F ^ {\prime}}) ^ 4$, and one can still achieve a truly sub-cubic running time overall by setting the block size $\Delta = O(\abs{T_1} ^ c)$ for any $0 < c < \frac{1}{3}$.

\begin{algorithm} [H]
    \caption{Computation of $\hat S(\sub(u_x))$ from $\hat S(\sub(u_y))$ for $y > x$} \label{algorithm:transition2}
    \begin{algorithmic}[1]
        \State $\hat S(\sub(u_x)) \gets [-\infty]_{2\abs{T_2} + 1, 2\abs{T_2} + 1}$
        \For {$v \in T_2$} \Comment{The first case (bottom)}
            \For {$z \in [0, 2(\abs{F} - \abs{F ^ {\prime}})]$}
                \For {$w \in [0, 2(\abs{F} - \abs{F ^ {\prime}})]$}
                    \State $i \gets \mincol(S(l_{x + 1, k + 1}), \Left(v) + 1, z)$
                    \State $j \gets \maxrow(S(r_{x + 1, k + 1}), \Right(v) - 1, w)$
                    \State $t \gets \eta(u_x, v) + s_{\Left(v) + 1, i}(l_{x + 1, k + 1}) + s_{i, j}(F ^ {\prime}) + s_{j, \Right(v) - 1}(r_{x + 1, k + 1})$
                    \State $\hat S(\sub(u_x)) \gets \rangemax(\hat S(\sub(u_x)), \Left(v), \Right(v), t)$
                \EndFor
            \EndFor
        \EndFor
        \For {$v \in T_2$} \Comment{The second case (middle)}
            \For {$z \in [0, 2(\abs{F} - \abs{F ^ {\prime}})]$}
                \For {$w \in [0, 2(\abs{F} - \abs{F ^ {\prime}})]$}
                    \For {$y \in [x + 1, k]$}
                        \State $i \gets \mincol(S(l_{x + 1, y}), \Left(v) + 1, z)$
                        \State $j \gets \maxrow(S(r_{x + 1, y}), \Right(v) - 1, w)$
                        \State $t \gets \eta(u_x, v) + s_{\Left(v) + 1, i}(l_{x + 1, y}) + {\hat s}_{i, j}(\sub(u_y)) + s_{j, \Right(v) - 1}(r_{x + 1, y})$
                        \State $\hat S(\sub(u_x)) \gets \rangemax(\hat S(\sub(u_x)), \Left(v), \Right(v), t)$
                    \EndFor
                \EndFor
            \EndFor
        \EndFor
    \end{algorithmic}
\end{algorithm}

\paragraph{Speed-up for the second case.} To speed up the running time for the second case to $\tilde O(\abs{T_2}(\abs{F} - \abs{F ^ {\prime}}) ^ 2)$, we note that on $T_2$, $v$ must be an ancestor of the node $v ^ {\prime}$ that $u_y$ maps to, and the contribution to ${\hat S(\sub(u_x))}$ given $u_x$ mapping to $v$ and $u_y$ mapping to $v ^ {\prime}$ is 
\begin{align*}
    \eta(u_x, v) + s_{\Left(v) + 1, \Left(v ^ {\prime})}(l_{x + 1, y}) + \hat s_{\Left(v ^ {\prime}), \Right(v ^ {\prime})}(\sub(u_y)) + s_{\Right(v ^ {\prime}), \Right(v) - 1}(r_{x + 1, y}).
\end{align*}
We let 
\begin{align*}
    \rho(u_y, v ^ {\prime}, v) &= s_{\Left(v) + 1, \Left(v ^ {\prime})}(l_{x + 1, y}) + \hat s_{\Left(v ^ {\prime}), \Right(v ^ {\prime})}(\sub(u_y)) + s_{\Right(v ^ {\prime}), \Right(v) - 1}(r_{x + 1, y})
\end{align*}
and 
\begin{align*}
    \phi(u_y, v) &= \max\{\rho(u_y, v ^ {\prime}, v) \mid v ^ {\prime} \in \sub(v)\}.
\end{align*}
If $\phi(u_y, v)$ is known for all $u_y$ and $v$, computing the contribution to ${\hat S}(\sub(u_x))$ takes only $O(\abs{T_2}(\abs{F} - \abs{F ^ {\prime}}))$ time, which is not a bottleneck. 

Consider enumerating $u_y$ and $v ^ {\prime}$ and updating $\phi(u_y, v)$ for each $v$ on the path from $v ^ {\prime}$ to $\Root(T_2)$ using $\rho(u_y, v ^ {\prime}, v)$. For fixed $u_y$ and $v ^ {\prime}$, $\rho(u_y, v ^ {\prime}, v) = {\hat s}_{\Left(v ^ {\prime}), \Right(v ^ {\prime})}(\sub(u_y)) + f(v)$ for some function $f(v)$ with value in $[0, 2(\abs{F} - \abs{F ^ {\prime}})]$. Moreover, if we move $v$ from $v ^ {\prime}$ to $\Root(T_2)$, $f(v)$ is non-decreasing, and therefore only changes $O(\abs{F} - \abs{F ^ {\prime}})$ times. It is easy to find all the endpoints for the changes in $\tilde O(\abs{F} - \abs{F ^ {\prime}})$ time. The updates to $\phi(u_y, v)$ now consist of $O(\abs{F} - \abs{F ^ {\prime}})$ path modifications, and can be done with a data structure such as a \emph{link/cut tree} \cite{SLEATOR1983362} in $\tilde O(\abs{F} - \abs{F ^ {\prime}})$ time. Since we need to enumerate $u_y$ and $v ^ {\prime}$, the total running time is $\tilde O(\abs{T_2}(\abs{F} - \abs{F ^ {\prime}}) ^ 2)$, which meets the bound in Lemma \ref{lemma:transition2}.

\subsection{Faster max-plus multiplications for special bounded-difference matrices} \label{sec:mul}
In this section we show that Theorem \ref{theorem:mul} is true, thereby verifying that the type I transitions can be done within the desired time bound. 

We will first show an algorithm that proves a weaker version of Theorem \ref{theorem:mul} where no $-\infty$ below the main diagonal of $A$ is present. For matrices $M_1, M_2$ of the same dimensions, let $\max(M_1, M_2)$ be the entry-wise maximum of $M_1$ and $M_2$.
\begin{lemma} \label{lemma:mulmonotoneweaker}
    Let $A$ be an $l \times l$ row-monotone, column-monotone matrix whose entries are integers in $[0, m]$. Let $B$ be an $l \times n$ row-monotone, column-monotone matrix. Let $C ^ {\prime}$ be any row-monotone, column-monotone $l \times n$ matrix. Then $C = \max(C ^ {\prime}, A \star B)$ can be computed in $\tilde O(m ^ 2n)$ time (under the range operation/query model from Section \ref{sec:model}).
\end{lemma}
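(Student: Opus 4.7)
My plan is to adapt Algorithm~\ref{algorithm:mulbasic}, exploiting the boundedness of $A$ on $[0,m]$ to replace the outer iteration over $B$-values (which cannot be bounded, since $B$ is unbounded) by a second iteration over $A$-levels. First I define
\begin{align*}
    \alpha_y(i) \;:=\; \min\{k : A_{ik} \ge y\}
\end{align*}
for $y \in \{0, 1, \ldots, m\}$, with $\alpha_y(i) = l+1$ when no such $k$ exists. By column-monotonicity of $A$, each $\alpha_y$ is non-decreasing in $i$, and by construction it is non-decreasing in $y$. The key identity driving everything is
\begin{align*}
    (A \star B)_{ij} \;=\; \max_{y \in \{0, 1, \ldots, m\}} \bigl(y + B_{\alpha_y(i), j}\bigr),
\end{align*}
which holds because for any $k$ achieving the inner max, replacing $k$ with the smaller index $\alpha_{A_{ik}}(i)$ leaves $A_{ik}$ unchanged (the leftmost column in row $i$ reaching that value) and can only increase $B_{kj}$ by column-monotonicity of $B$.

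The algorithm starts by setting $C := C'$ using the matrix-copy operation of Section~\ref{sec:model} and then applies $\rangemax$ updates. I plan to iterate, for each column $j \in [1,n]$, over all pairs $(y, y') \in \{0, 1, \ldots, m\}^2$. For each such $(j, y, y')$ I will use $\maxrow$ and $\mincol$ queries on $A$ (and on stored columns of $B$) to compute a column index $k = \alpha_{y'}(i^*)$ (where $i^*$ is the boundary row at which the staircase $\alpha_{y'}$ transitions in a way relevant to $j$) and a row index $i = \maxrow(A, k, y)$, and then perform one update $\rangemax(C, i, j, y + B_{k, j})$. With $O(m^2)$ triples per column and each query/update costing $\tilde O(1)$ in the model, the total running time is $\tilde O(m^2 n)$.

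Correctness follows a useful-triple argument closely paralleling the proof of Lemma~\ref{lemma:mulbasic}. A triple $(i, k, j)$ that contributes a locally maximal value to $(A \star B)_{ij}$ must have $A_{ik}$ at the last occurrence of its value in column $k$ of $A$, and by the level-set identity, $k$ must equal $\alpha_{A_{ik}}(i)$. A symmetric local-maximality argument on $B$, combined with the fact that the only witnesses to $B$-transitions that can matter are those forced by transitions of $A$ at some level $y' \in \{0, \ldots, m\}$, shows that the iteration above is exhaustive of useful triples; therefore $C$ ends up equal to $\max(C', A \star B)$.

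The main obstacle is justifying the substitution ``$y' \in [0,m]$'' for the missing enumeration ``$x \in [0, m_B]$'' in Algorithm~\ref{algorithm:mulbasic}: because $B$ is unbounded, one cannot iterate over $B$-levels directly, so the proof must instead argue that every locally maximal $B$-transition in a given column is forced by a level change of $A$. Pinning down the exact map $(y, y') \mapsto (i, k)$ so that every useful triple is captured -- and coalescing consecutive rows of $A$ sharing the same level profile into a single $\rangemax$ operation so that the $O(m^2)$ per-column budget is not exceeded -- is where the detailed bookkeeping will concentrate.
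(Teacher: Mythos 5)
Your level-set identity $(A \star B)_{ij} = \max_{y \in \{0,\ldots,m\}}\bigl(y + B_{\alpha_y(i),j}\bigr)$ is correct, and you are right that exploiting the $[0,m]$ bound on $A$ is the lever. But the mechanism you propose for replacing the unbounded $B$-enumeration is where the proof breaks. You assert that ``every locally maximal $B$-transition in a given column is forced by a level change of $A$,'' but this is simply false: the row indices $k$ at which column $j$ of $B$ steps down are determined entirely by $B$ and have no a priori relationship to the staircase structure of $A$. Consequently the map $(y,y') \mapsto (i,k)$ built from $\alpha_{y'}$ and a mysterious ``boundary row $i^*$ relevant to $j$'' (which you never define) cannot in general land on the $B$-transition rows that witness useful triples, and the coverage argument does not go through. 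You also cannot afford to coalesce over all rows $i$ naively, since for a single level $y$ the quantity $\alpha_y(i)$ can jump $\Theta(l)$ times as $i$ ranges over $[1,l]$, and nothing restricts which of these jumps cause $B_{\alpha_y(i),j}$ to change.

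The paper's argument is simpler and sidesteps this entirely. It uses Algorithm~\ref{algorithm:mulbasic} nearly verbatim, changing only the $x$-loop to range over $x \in [B_{1,j}-m,\, B_{1,j}]$ and keeping $k = \maxrow(B,j,x)$, $i = \maxrow(A,k,y)$. The justification is a one-line domination argument: if $B_{k,j} < B_{1,j}-m$, then $A_{i,k}+B_{k,j} \le m + B_{k,j} < B_{1,j} \le A_{i,1}+B_{1,j}$, so the triple $(i,k,j)$ can never be useful. In other words, within each column of $B$ only the top $m+1$ values can participate in a useful triple, and those are located by $\maxrow$ queries on $B$ itself --- no attempt is made to read $B$'s transitions off of $A$. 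Your proposal misses this observation; supplying it would both fill the gap and render your $\alpha_y$-machinery unnecessary.
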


To prove Lemma \ref{lemma:mulmonotoneweaker}, we can use Algorithm \ref{algorithm:mulweaker}.
\begin{algorithm} [H]
    \caption{Computation of $C = \max(C ^ {\prime}, A \star B)$ in Lemma \ref{lemma:mulmonotoneweaker}} \label{algorithm:mulweaker}
    \begin{algorithmic}[1]
        \Procedure{MUL2}{$A, B, C ^ {\prime}$}
            \State $C \gets C ^ {\prime}$
            \For {$j \in [1, n]$}
                \For {$x \in [B_{1, j} - m, B_{1, j}]$} \label{line:xrange}
                    \For {$y \in [0, m]$}
                        \State $k \gets \maxrow(B, j, x)$
                        \State $i \gets \maxrow(A, k, y)$
                        \State $C \gets \rangemax(C, i, j, A_{i, k} + B_{k, j})$
                    \EndFor
                \EndFor
            \EndFor
        \EndProcedure
    \end{algorithmic}
\end{algorithm}
The only major difference between Algorithm \ref{algorithm:mulweaker} and Algorithm \ref{algorithm:mulbasic} is on Line \ref{line:xrange}: we enumerate $x$ only over the $m + 1$ largest possible values on column $j$. Note that a triple $(i, k, j)$ cannot be useful when $B_{k, j} < B_{1, j} - m$ since 
\begin{align*}
    A_{i, k} + B_{k, j} \le m + B_{k, j} < B_{1, j} \le A_{i, 1} + B_{1, j}.
\end{align*}
The correctness of Algorithm \ref{algorithm:mulweaker} is now straightforward.

We now prove Theorem \ref{theorem:mul} by designing a recursive process that computes $C ^ {\prime} = A ^ {\prime} \star B ^ {\prime}$ for $l \times l$ sub-matrix $A ^ {\prime}$ of $A$ and $l \times n$ sub-matrix $B ^ {\prime}$ of $B$. Suppose $l$ is a power of 2. We partition $A ^ {\prime}$ into four $l / 2 \times l / 2$ matrices and $B ^ {\prime}$ into two $l / 2 \times n$ matrices
\begin{align*}
    A ^ {\prime} &= 
        \begin{pmatrix}
             D & E\\
             -\infty & F
        \end{pmatrix}, \\
    B ^ {\prime} &= 
        \begin{pmatrix}
             G \\
             H
        \end{pmatrix}.
\end{align*}
Then 
\begin{align*}
    A ^ {\prime} \star B ^ {\prime} = 
        \begin{pmatrix}
             \max(D \star G, E \star H) \\
             F \star H
        \end{pmatrix}.
\end{align*}
Since the sub-matrix $E$ does not contain $-\infty$, $E \star H$ can be computed using Algorithm \ref{algorithm:mulweaker}. We now reduce the problem into two smaller problems of size $l / 2$. Algorithm \ref{algorithm:mul} describes how to multiply $A ^ {\prime}$ and $B ^ {\prime}$ recursively based on this idea.
\begin{algorithm} [H]
    \caption{Computation of $C ^ {\prime} = A ^ {\prime} \star B ^ {\prime}$ for $l \times l$ sub-matrix $A ^ {\prime}$ of $A$ and $l \times n$ sub-matrix $B ^ {\prime}$ of $B$ in Theorem \ref{theorem:mul}}  \label{algorithm:mul}
    \begin{algorithmic}[1]
        \Procedure{MUL3}{$A ^ {\prime}, B ^ {\prime}$}
            \If {$l > \Delta$}
                \State $D, E, F \gets {A ^ {\prime}}_{[1, l / 2], [1, l / 2]}, {A ^ {\prime}}_{[1, l / 2], [l / 2 + 1, l]}, {A ^ {\prime}}_{[l / 2 + 1, l], [l / 2 + 1, l]}$
                \State $G, H \gets {B ^ {\prime}}_{[1, l / 2], [1, n]}, {B ^ {\prime}}_{[l / 2 + 1, l], [1, n]}$
                \State ${C ^ {\prime}}_{\textrm{up}} \gets $MUL$_2$($E$, $H$, MUL$_3$($D$, $G$))
                \State ${C ^ {\prime}}_{\textrm{down}} \gets $MUL$_3$($F$, $H$))
                \State ${C ^ {\prime}} \gets \begin{pmatrix} C_{\textrm{up}} \\ C_{\textrm{down}} \end{pmatrix}$
            \Else
                \State Calculate $C ^ {\prime} = {A ^ {\prime}} \star {B ^ {\prime}}$ using Theorem \ref{theorem:bringmann} \label{line:bringmann}
            \EndIf
        \EndProcedure
    \end{algorithmic}
\end{algorithm}

It may seem that we cannot use Theorem \ref{theorem:bringmann} on line \ref{line:bringmann} since $A ^ {\prime}$ and $B ^ {\prime}$ may contain $-\infty$, which is not allowed in Theorem \ref{theorem:bringmann}. To resolve this issue, before calling Algorithm \ref{algorithm:mul} on $A$ and $B$, we find $W$ which is the largest difference between adjacent non-$(-\infty)$ entries in $A$ and $B$ (for similarity matrices we can simply let $W = 2$) and set every $(i, j)$-entry below the main diagonals of $A$ and $B$ to $W(j - i)$. This does not affect the answer on or above the main diagonal. Since the entries on the main diagonals of $A$ and $B$ are zero, in all the recursive calls $A ^ {\prime}$ and $B ^ {\prime}$ are $W$-bounded-difference matrices.

We now analyze the running time of computing $C = A \star B$ using Algorithm \ref{algorithm:mul}. If we ignore Line \ref{line:bringmann}, then each recursive call contributes $\tilde O(m ^ 2n)$ to the running time. Since we stop the recursion at $l \le \Delta$, the total number of recursive calls is $O(n / \Delta)$. Therefore the total running time for this part is $\tilde O(n ^ 2m ^ 2 / \Delta)$. Line \ref{line:bringmann} is executed $O(n / \Delta)$ times. The time needed for multiplying two bounded-difference matrices of sizes $\Delta \times \Delta$ and $\Delta \times n$ does not exceed $n / \Delta$ times the time needed for multiplying two $\Delta \times \Delta$ bounded-difference matrices, and by applying Theorem \ref{theorem:bringmann} we get $\tilde O(n\Delta ^ {1.8244})$ randomized and $\tilde O(n\Delta ^ {1.8603})$ deterministic time per multiplication. All $O(n / \Delta)$ multiplications together take $\tilde O(n ^ 2\Delta ^ {0.8244})$ randomized and $\tilde O(n ^ 2\Delta ^ {0.8603})$ deterministic time. By setting $\Delta = \tilde O(m ^ {1.0963})$ in the randomized case and $\Delta = \tilde O(m ^ {1.0751})$ in the deterministic case we get the bounds in Theorem \ref{theorem:mul}. Our analysis breaks when $\Delta > n$, but in this case, Algorithm \ref{algorithm:mul} is equivalent to directly applying the algorithm from Theorem \ref{theorem:bringmann}, and one can verify that it still achieves the bounds in Theorem \ref{theorem:mul}. 

\section{Potential speed-ups and future work} \label{sec:futurework}

\subsection{Potential speed-ups}
We note that in Section \ref{sec:mul}. The bounded-difference matrix multiplciation in \cite{Bringmann16} is applied to rectangular matrices by repeated calling it on square matrices as a black box. By closely examining this multiplication algorithm, one can adapt it to the rectangular case in a more efficient way. The algorithm consists of three phases, and the second phase relies on ordinary matrix multiplication between rectangular matrices. Optimization is possible since rectangular matrix multiplication can be done more efficiently than repeated square matrix multiplications \cite{legall2012}. Moreover, the analysis of the running time of the third phase of their algorithm is reliant on a result on balanced bipartite graphs (Lemma 4 of \cite{Bringmann16}). A similar result would also hold for unbalanced bipartite graphs, and can be used to better bound the running time for the rectangular case. These speed-ups can lead to slight improvement in the exponents in our running time.

The algorithm in \cite{Bringmann16} did not use the row-monotone and column-monotone properties of the matrices involved. We note that Vassilevska Williams and Xu extended the algorithm to less structured matrices \cite{xu2020}, and one of their results showed a truly sub-cubic algorithm for \emph{Monotone Min-plus Product}, which was later improved in \cite{gu2021faster}. These results showed that the min-plus product of $n \times n$ matrices $A$ and $B$ where entries of $B$ are non-negative integers bounded by $O(n)$ and each row of $B$ is non-decreasing can be deterministically computed in $O(n ^ {2.8653})$ time (no bounded-difference property required!). Therefore, there is potential that even better time bound for tree edit distance can be achieved by better exploiting the properties of similarity matrices when doing the max-plus products.

\subsection{Future work}
There are still many remaining open problems related to tree edit distance. The most valuable one is perhaps whether weakly sub-cubic\footnote{By weakly sub-cubic we mean sub-cubic but not truly sub-cubic (i.e. by sub-polynomial factors).} algorithms can be found for the weighted tree edit distance problem. 

The conditional hardness results in \cite{apsphard2020} were based on APSP, but weakly sub-cubic algorithms for APSP have long been known, and the current best algorithm achieves the same $n ^ 3 / 2 ^ {\Omega(\sqrt{\log {n}})}$ running time as the algorithm for min-plus product by Williams \cite{rrwapsp2014}, so a natural direction for sub-polynomial improvement to the cubic running time for weighted tree edit distance is to find a reduction from weighted tree edit distance to APSP. It is tempting to think that one can adapt our algorithm to the weighted setting to obtain a weakly sub-cubic running time simply by replacing the specialized algorithm in \cite{Bringmann16} with the more general algorithm by Williams. Unfortunately, this is not the case. Firstly, our algorithm involves more than $O(n ^ {0.5})$ max-plus products. It works in the unweighted setting since most of the time the entries in one of the matrices involved are small and the product can be computed very efficiently. This is no longer the case in the weighted setting. Secondly, some other components of our algorithm have a running time dependent on weights. When the weights become large enough, the running time of these components becomes unacceptable. 

Another direction of research is to design better algorithms for bounded tree edit distance. When the tree edit distance is bounded by $k$, the current best algorithm runs in $O(nk ^ 2\min(k, \log n))$ time by combining the results of Akmal and Jin \cite{akmal2021faster} and Touzet  \cite{Touzet05, Touzet07}, which when $k = O(n)$ gives us a running time of $O(n ^ 3 \log n)$. Previous methods achieve their running time by pruning useless states in their dynamic programming schemes. In Touzet's method, this is done using the pre-order traversal sequence, which also seems to work on the bi-order traversal sequence our algorithm is based on. Thus it is possible that some adaptation of our algorithm to the bounded case can give us an $\tilde O(nk ^ {2 - \eps})$ running time for some $\eps > 0$. Moreover, since the easier problem of bounded \emph{string} edit distance can be solved in $\tilde O(n + k ^ 2)$ time \cite{Meyers86,LandauV88}, it is possible that bounded tree edit distance admits an $\tilde O(n + k ^ c)$ algorithm for some constant $c$. Since (unweighted) tree edit distance in general can be solved in truly sub-cubic time, an ideal value of $c$ would be less than 3.

In the unrooted setting, the most recent algorithm was by Dudek and Gawrychowski \cite{unrooted2018}, which improved upon the previous best $O(n^3 \log n)$ time bound due to Klein \cite{Klein98}, and achieved the same $O(n ^ 3)$ running time as the algorithm for the rooted case in \cite{demaine2007}. It will also be interesting to see if a truly sub-cubic running time is possible for this unrooted setting. Moreover, when the distance is bounded by $k$, unlike the case in the rooted setting, there has not been an algorithm with run-time quasi-linear in $n$ (i.e. $\tilde O(nk ^ c)$ for some constant $c$). The algorithm by Akmal and Ce \cite{akmal2021faster}, for example, only runs in quadratic time when $k = O(1)$. This gives us another area where potential future work can be done.

\section*{Acknowledgment}
I want to thank Ce Jin for introducing tree edit distance to me.

I want to thank Professor Virginia Vassilevska Williams and Yinzhan Xu from Massachusetts Institute of Technology for giving me help on the matrix multiplication part of the algorithm. 

Last but not least, I want to thank Professor Virginia Vassilevska Williams, Ce Jin, Yinzhan Xu and Lijie Chen from Massachusetts Institute of Technology and Zhaoyi Hao from University of New South Wales for giving me advice on my write-up.

\bibliography{theory}

\newpage
\appendix

\section{Similarity matrices are not Anti-Monge matrices} \label{appendix:notmonge}

We show that the similarity matrices defined in Definition \ref{def:similaritymatrix} do not admit the Anti-Monge property, which is defined as follows:

\begin{definition} [Anti-Monge property] \label{def:monge}
An $n \times n$ matrix $A = a_{ij}$ has the \textit{Anti-Monge Property} if and only if for all $1 \le i \le i ^ {\prime} \le n$ and $1 \le j \le j ^ {\prime} \le n$, we have
\begin{align*}
    a_{ij} + a_{i ^ {\prime}j ^ {\prime}} \ge a_{ij ^ {\prime}} + a_{i ^ {\prime}j}.
\end{align*}
\end{definition}

Consider the following counter example shown in Figure \ref{fig:counterexample} (the lowercase letters on nodes are the symbols, and the subscripts are the indices):

\begin{figure}[H] 
    \centering 
    \begin{tikzpicture}[
scale=0.5,
every node/.style={solid},
edge from parent/.style={draw,solid}, 
level/.style={sibling distance=20mm},
level distance=30mm
]
\node [circle,draw,label=above:{$T_1$}] at (0, 0) {$a_1$}
    [child anchor=north]
    child {node [circle,draw] {$a_2$} edge from parent[solid]
        child {node [circle,draw] {$b_3$} edge from parent[solid]
        }
        child {node [circle,draw] {$c_4$} edge from parent[solid]
        }
        child {node [circle,draw] {$c_5$} edge from parent[solid]
        };
    };
    
\node [circle,draw,label=above:{$T_2$}] at (8, 0) {$d_1$}
    [child anchor=north]
    child {node [circle,draw] {$b_2$} edge from parent[solid]
    }
    child {node [circle,draw] {$a_3$} edge from parent[solid]
        child {node [circle,draw] {$a_4$} edge from parent[solid]
        };
    }
    child {node [circle,draw] {$c_5$} edge from parent[solid]
    }
    child {node [circle,draw] {$c_6$} edge from parent[solid]
    };
    
\end{tikzpicture}
\caption{A counter-example for the Anti-Monge Property} \label{fig:counterexample}
\end{figure}
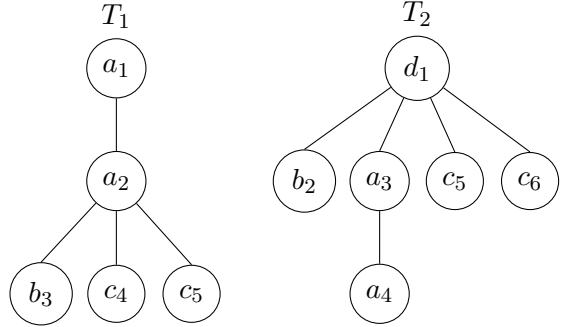  

We now argue that $S(T_1, T_2)$ does not admit the Anti-Monge property. The bi-order traversal sequence of $T_2$ is $d_1b_2b_2a_3a_4a_4a_3c_5c_5c_6c_6d_1$. Therefore, we have
\begin{itemize}
    \item $T_2[2, 8)$ consists of nodes $\{b_2, a_3, a_4\}$ and $\similarity(T_1, T_2[2, 8)) = 4$ with the mapping from $T_1$ to $T_2$ being $\{a_1 \rightarrow a_3, a_2 \rightarrow a_4\}$;
    \item $T_2[4, 12)$ consists of nodes $\{a_3, a_4, c_5, c_6\}$ and $\similarity(T_1, T_2[4, 12)) = 5$ with the mapping from $T_1$ to $T_2$ being $\{b_3 \rightarrow a_3, c_4 \rightarrow c_5, c_5 \rightarrow c_6\}$;
    \item $T_2[2, 12)$ consists of nodes $\{b_2, a_3, a_4, c_5, c_6\}$ and $\similarity(T_1, T_2[2, 12)) = 6$ with the mapping from $T_1$ to $T_2$ being $\{b_3 \rightarrow b_2, c_4 \rightarrow c_5, c_5 \rightarrow c_6\}$;
    \item $T_2[4, 8)$ consists of nodes $\{a_3, a_4\}$consists of node $b_3$ and $\similarity(T_1, T_2[4, 8)) = 4$ with the mapping from $T_1$ to $T_2$ being $\{a_1 \rightarrow a_3, a_2 \rightarrow a_4\}$.
\end{itemize}
Therefore, we have
\begin{align*}
    \similarity(T_1, T_2[2, 8)) + \similarity(T_1, T_2[4, 12)) = 4 + 5 < 6 + 4 = \similarity(T_1, T_2[2, 12)) + \similarity(T_1, T_2[4, 8))
\end{align*}
and $(i, i ^ {\prime}, j, j ^ {\prime}) = (2, 4, 8, 12)$ gives us a counter-example to the property in Definition \ref{def:monge} for $S(T_1, T_2)$.

\end{document}